\newtheorem{theo}{Théorème}[section]
\newtheorem{prop}[theo]{Proposition}
\newtheorem{lem}[theo]{Lemme}
\theoremstyle{definition}
\theoremstyle{remark}
\newtheorem*{rem}{Remarque}
\renewcommand\leq{\leqslant}
\renewcommand\geq{\geqslant}
\renewcommand{\Re}{\operatorname{Re}}
\renewcommand{\Im}{\operatorname{Im}}
\newcommand\R{\mathbb R}
\newcommand\C{\mathbb C}
\newcommand{\abs}[1]{\left\lvert#1\right\rvert}
\newcommand{\norme}[1]{\left\lVert#1\right\rVert}
\newcommand{\pare}[1]{\left(#1\right)}
\newcommand{\prods}[2]{\left\langle#1,#2\right\rangle}
\DeclareMathAlphabet{\mathonebb}{U}{bbold}{m}{n}
\numberwithin{equation}{section}
\apptocmd{\sloppy}{\hbadness 10000\relax}{}{}
\begin{document}

\title{Quasimodes and a lower bound for the local energy decay of the Dirac equation in Schwarzschild-Anti-de Sitter spacetime}
\author{Guillaume \textsc{Idelon-Riton}}

\maketitle

\abstract{We prove the existence of exponentially accurate quasimodes using the square of the Dirac operator on the Schwarzschild-Anti-de Sitter spacetime and the Agmon estimates. We then deduce a logarithmic lower bound for the local energy decay of the Dirac propagator.}

\section{Introduction}

There has been a lot of works concerning the stability of black holes spacetimes in mathematical general relativity. Recently, some significant results have been obtained in this direction. Concerning the Schwarzschild solution, M. Dafermos, G. Holzegel and I. Rodnianski \cite{DaHoRo16} obtained linear stability to gravitational perturbation by means of vector field methods. F. Finster and J. Smoller \cite{FinSmo16} obtained linear stability of the non-extreme Kerr black hole using an integral representation for the solution of the Teukolsky equation. Concerning the full non-linear stability problem, P. Hintz and A. Vasy \cite{HiVa16} obtained the non-linear stability of the Kerr-de Sitter family of black hole by means of a general framework which uses microlocal techniques. \\
In particular, the latest uses a precise understanding of the resonances for their problem. Resonances allow to obtain a precise decay rate for the fields and at which frequency this decay happens. This has been a subject of advanced study in the last decades. In mathematical general relativity, the analysis of resonances, which are also called quasi-normal modes, begins with the work of Bachelot and Motet-Bachelot \cite{BaMBa93} for the Schwarzschild black hole. This study was then pursued by A. S\'{a} Barreto and M. Zworski \cite{SBZ97} for spherically symmetric black holes where resonances are presented as poles of the meromorphic continuation of the resolvent. J-F Bony and D. Häfner \cite{BoHa} gave a full expansion of the solution of the wave equation in terms of resonant states which gives an exponential decay of the local energy for the Schwarzschild-de Sitter solution. R. Melrose, A. S\'{a} Barreto and A. Vasy \cite{MeSBV14} extended this result to more general manifolds and initial data. This was also extended to the case of rotating black hole with positive cosmological constant by S. Dyatlov \cite{Dya11a}, \cite{Dya12} and then by A. Vasy \cite{Va13} who develops more general and flexible techniques. The method developed by S. Dyatlov was also extended to the framework of Dirac operators for the Kerr-Newman-de Sitter black hole by A. Iantchenko \cite{Ia15b} to define resonances . This result generalizes papers by the same author concerning resonances in the de Sitter-Reissner-Nordström spacetime where expansion of the solution in term of resonant states was also given \cite{Ia14}, \cite{Ia15}. \\
Concerning resonances in Anti-de Sitter black holes, C. Warnick \cite{War14} defines resonances in asymptotically Anti-de Sitter black holes. Quasi-normal modes were also defined for Kerr-Anti-de Sitter black holes by O. Gannot \cite{Gan16}. In the case of Schwarzschild-Anti-de Sitter black hole, the same author proves the existence of resonances using the black box method. By means of exponentially accurate quasimodes, he obtains the localization of resonances exponentially close to the real axis \cite{Gan14}. This is linked to a logarithmic local energy decay for the Klein-Gordon fields that was proven by G. Holzegel and J. Smulevici \cite{HolSmu2}, \cite{HolSmu3} in the more general setting of the Kerr-Anti-de Sitter family of black holes. This seems to indicate that the Kerr-Anti-de Sitter spacetime may not be stable, at least for the reflecting boundary conditions. For Schwarzschild-Anti-de Sitter, G. Holzegel and J. Smulevici \cite{Holsmu} proved the stability of this spacetime with respect to spherically symmetric perturbation where the trapping causing the instability is not seen. When looking at dissipative boundary conditions, G. Holzegel, J. Luk, J. Smulevici and C. Warnick \cite{HoLuSmuWar15} prove some estimates that should be useful for proving the non-linear stability of anti-de Sitter space. For higher dimensional Anti-de Sitter spaces, A. Bachelot studied the Cauchy problem for the Klein-Gordon fields in $AdS^{5}$ \cite{Ba11} and investigate the decay of these fields in several settings. The same author studied the propagation of gravitational waves in a Poincaré patch of $AdS^{5}$ \cite{Ba13} where a new Hilbert functional framework is introduced.\\
Although there's a significant amount of results when considering spacetimes with a negative cosmological constant, most of them are concerned with the Klein-Gordon fields. As far as we know, few things are known concerning Dirac fields for spacetimes with a negative cosmological constant. The first result concerning these fields was obtained by A. Bachelot \cite{Bachelot} where he studied the well posedness of the Dirac equation on the Anti-de Sitter space and show the existence of a B-F bound. After having reduced the equation to a spectral problem, this relies on a careful analysis of the states near the boundary. Concerning the Schwarzschild-Anti-de Sitter black hole, we obtained asymptotic completeness for massive Dirac fields in \cite{IR14}. This result shows that the particles behave asymptotically like a transport at speed $1$ in the direction of the black hole.
\bigbreak
In order to have a better understanding of the Dirac fields in the Schwarzschild-Anti-de Sitter spacetime, we are now interested in the local properties of these fields. In particular, this paper is devoted to the proof of a logarithmic lower bound on the local energy decay for the massive Dirac fields in the Schwarzschild-Anti-de Sitter spacetime.\\
More precisely, the massive Dirac equation can be put in the following form:
\begin{equation}
\partial_{t} \phi = i H_{m} \phi,
\end{equation}
where:
\begin{equation}
H_{m} = i\gamma^{0}\gamma^{1} \partial_{x}  + iA\pare{x} \cancel{D}_{\mathbb{S}^{2}} -m \gamma^{0}B\pare{x},
\end{equation}
with the Dirac matrices $\gamma^{i}$. When considering this equation for $x\in ]-\infty,0[$ with reflecting boundary condition, it was proven in a precedent paper \cite{IR14} that this is well-posed. After having given some properties of these fields in the asymptotic regions of our spacetime in \cite{IR14}, we are now interested in the decay of the local energy for these fields. In fact, when looking at the potentials $A$ and $B$, we see that $B$ is confining near $0$ whereas $A$ is having the usual non degenerate maximum due to the photon sphere. Even though it cannot be the same picture for Dirac fields due to the matrices, it is good to keep in mind what is happening for the Klein-Gordon fields where this kind of behavior creates a well where the particles are classically trapped. In this picture, this seems to indicate that the local energy will decay slowly. In fact, for the massive Dirac fields, this decay is not faster than logarithmic. This is the subject of the main result of this paper:
\begin{theo}
For all compact $K \subset ]-\infty,0[$, there exists a constant $C>0$ such that:
\begin{equation*}
\underset{t \to +\infty}{\limsup} \underset{\varphi \in \mathcal{H}, \norme{\varphi}= 1}{\sup} \ln\pare{t}\norme{e^{itH}\varphi}_{L^{2}\pare{K}} \geq C.
\end{equation*}
\end{theo}
To prove this result, we will look at the generalized eigenvalue problem for a semi-classical Dirac operator that we introduce now:
\begin{equation}
H = i\gamma^{0}\gamma^{1} h\partial_{x}  + \gamma^{0}\gamma^{2}A\pare{x} -hm \gamma^{0} B\pare{x}.
\end{equation}
This operator will be considered on the half-line $]-\infty,0[$ with reflecting boundary conditions. Coming back to the picture for Klein-Gordon fields, we could think that the presence of the well will create eigenvalues. However, thanks to the exponential decay of our potentials at $-\infty$ and to tunneling effects, this is not the case. Nevertheless, it is still possible to obtain approximate solutions to the generalized eigenvalue problem for energies close to the bottom of the potential, where the remainder is exponentially decaying in term of the semi-classical parameter. These solutions are also called quasimodes. This is the subject of the following theorem:
\begin{theo}
Let $S>0$ such that $\frac{1}{l^{2}}+S < A^{2}\pare{x_{+}}$. There exist constants $h_{0} >0$, $D >0$, a real number $\pare{E^{+}\pare{h}}^{\frac{1}{2}}>0$ such that $E^{+}\pare{h} < \frac{1}{l^{2}} + S$ and a function $\varphi \in D\pare{H}$ such that $\norme{\varphi}_{L^{2}\pare{]-\infty,0[}} = 1$ satisfying:
\begin{equation*}
\norme{\pare{H - \pare{E^{+}\pare{h}}^{\frac{1}{2}}}\varphi}_{L^{2}\pare{]-\infty,0[}} \leq e^{-\frac{D}{h}},
\end{equation*}
for all $h \in (0,h_{0})$.
\end{theo}
Let us finally notice that this result is in accordance with the one obtained by G. Holzegel and J. Smulevici in \cite{HolSmu3}.
\bigbreak
This paper is organized as follows. \\
In section $2$, we present the Schwarzschild-Anti-de Sitter black hole and the special feature of its geometry. Then we present the Dirac equation in this spacetime.\\
In section $3$, we construct the exponentially accurate quasimodes for our operator following the proof of \cite{Gan14}. \\
More precisely, in subsection $3.1$, we present the operators that we will use all along our analysis. In particular, we obtain estimates for elements of the domain of the square of our Dirac operator. We then restrict this operator to a compact set and obtain some formulas already like in \cite{Gan14}. We prove existence of an eigenvalue close to the local minimum of the potential for an operator $\tilde{P}$ constructed from our Dirac operator by taking a potential having the same asymptotic behavior as our, but being confining.\\
In subsection $3.2$, we obtain Agmon estimates. The presence of the mass creates a confining potential at the Anti-de Sitter infinity. Combined with the usual shape of the potential near the photon sphere, this creates a well where the particle is classically trapped. Nevertheless, at a quantum level, and because of the presence of the Schwarzschild black hole, we can see a tunneling effect where the probability of finding the particle in the classically forbidden region is exponentially decreasing. That is exactly what the Agmon estimates say. \\
In subsection $3.3$, we gives some estimates on the eigenvalue of $\tilde{P}$.\\
We then end this section by using Agmon estimates and a cut-off argument to construct exponentially accurate quasimodes for our Dirac operator which gives the theorem presented earlier.\\
In section $4$, we obtain a logarithmic lower bound for the local energy decay of the Dirac fields in the Schwarzschild-Anti-de Sitter spacetime following the ideas of \cite{HolSmu3} and obtaining this way a similar result.

\begin{center}
\bf{Acknowledgments}
\end{center}

This work was partially supported by the ANR project AARG.

\section{The Dirac equation on the Schwarzschild Anti-de Sitter spacetime}

\indent In this section, we present the Schwarzschild Anti-de Sitter space-time and give the coordinate system that we will work with in the rest of the paper. We quickly study the radial null geodesics and then formulate the Dirac equation as a system of partial differential equations that can be put under a spectral form.

\subsection{The Schwarzschild Anti-de Sitter space-time}

Let $\Lambda <0$. We define $l^{2}= \frac{-3}{\Lambda}$. We denote by $M$ the black hole mass.

In Boyer-Lindquist coordinates, the Schwarzschild-Anti-de Sitter metric is given by:
\begin{equation}
g_{ab}=\pare{1-\frac{2M}{r}+\frac{r^{2}}{l^{2}}} dt^{2} - \pare{1-\frac{2M}{r}+\frac{r^{2}}{l^{2}}}^{-1} dr^{2}- r^{2} \pare{d\theta^{2} + \sin^{2} \theta d\varphi^{2}}
\end{equation}
We define $F(r)= 1- \frac{2M}{r}+ \frac{r^{2}}{l^{2}}$. We can see that $F$ admits two complex conjugate roots and one real root $r=r_{SAdS}$. We deduce that the singularities of the metric are at $r=0$ and $r=r_{SAdS}=p_{+}+p_{-}$ where $p_{\pm}=\pare{Ml^{2} \pm \pare{M^{2}l^{4} + \frac{l^{6}}{27}}^{\frac{1}{2}}}^{\frac{1}{3}}$. (See \cite{Holsmu}) The exterior of the black hole will be the region $r \geq r_{SAdS}$ and our spacetime is then seen as $\R_{t} \times ]r_{SAdS},+\infty[ \times S^{2}$. It is well-known that the metric can be extended for $r \leq r_{SAdS}$ by a coordinate change which gives the maximally extended Schwarzschild-Anti-de Sitter spacetime. In this paper, we are only interested in the exterior region.

In order to have a better understanding of this geometry, we study the outgoing (respectively ingoing) radial null geodesics. Using the form of the metric we can see that along such geodesics, we have:
\begin{equation}
\frac{dt}{dr} = \pm F\pare{r}^{-1}.
\end{equation}
We thus introduce a new coordinate $x$ such that $t-x$ (respectively $t+x$) is constant along outgoing (respectively ingoing) radial null geodesics. In other words:
\begin{equation}
\frac{\mathrm dx}{\mathrm dr}= F(r)^{-1}.
\end{equation}
With this new coordinate, we have:
\begin{flalign}
\hphantom{A}& \lim_{r \to r_{SAdS}} x\pare{r}=- \infty \\
\hphantom{A}& \lim_{r \to \infty} x\pare{r}= 0.
\end{flalign}
This limit proves that, along radial null geodesic, a particle goes to timelike infinity in finite Boyer-Lindquist time (recall that along these geodesic, $t-r_{*}$ and $t+r_{*}$ are constants). This geometric property will lead to several differences compared to other spacetime. In the massless case, we would have to put boundary condition. In fact, even in the massive case, a confining potential appear near the Anti-de Sitter infinity. Combined with the usual bump created by the photon sphere, we expect that the energy will not decay really fast. In fact, for the wave equation, the local energy is decaying only logarithmically as was proven by G. Holzegel and J. Smulevici \cite{HolSmu2}, \cite{HolSmu3}. In this work, we obtain a logarithmic lower bound as in \cite{HolSmu3}.

\subsection{The Dirac equation on this spacetime}

Using the 4-component spinor $
\psi=\begin{pmatrix}
\psi_{1}\\
\psi_{2} \\
\psi_{3} \\
\psi_{4}
\end{pmatrix}$, the Dirac equation in the Schwarzschild-Anti-de Sitter spacetime takes the form:
\begin{equation} \label{opé1}
\left( \partial_{t} + \gamma^{0}\gamma^{1} \pare{F(r)\partial_{r} + \frac{F\pare{r}}{r} + \frac{F'\pare{r}}{4}} +\frac{F(r)^{\frac{1}{2}}}{r} \cancel{D}_{\mathbb{S}^{2}} + im \gamma^{0} F(r)^{\frac{1}{2}}\right) \psi = 0. 
\end{equation}
where $m$ is the mass of the field and $\cancel{D}_{\mathbb{S}^{2}}$ is the Dirac operator on the sphere. In the coordinate system given by $\pare{\theta,\varphi}\in [0;2\pi]\times [0;\pi]$, we obtain: $\cancel{D}_{\mathbb{S}^{2}}= \gamma^{0} \gamma^{2} \pare{ \partial_{\theta} + \frac{1}{2} \cot \theta} + \gamma^{0} \gamma^{3} \frac{1}{\sin \theta} \partial_{\varphi}$ where singularities appear, but we just have to change our chart in this case. We will now work in these coordinates. For more details about how to obtain this form of the equation, we refer to a previous work \cite{IR14}.\\
Recall that the Dirac matrices $\gamma^{\mu}$, $0\leq \mu \leq 3$, unique up to unitary transform, are given by the following relations:
\begin{equation}
\gamma^{0^{*}} = \gamma^{0}; \hspace{3mm} \gamma^{j^{*}} = -\gamma^{j},\hspace{3mm} 1\leq j \leq 3;\hspace{3mm} \gamma^{\mu} \gamma^{\nu} + \gamma^{\nu} \gamma^{\mu} = 2 g^{\mu \nu}\mathbf{1},\hspace{3mm} 0\leq \mu, \nu \leq 3.
\end{equation}
In our representation, the matrices take the form:
\begin{equation} \label{MatDir}
\gamma^{0} = i\begin{pmatrix}
0 & \sigma^{0} \\
-\sigma^{0} & 0
\end{pmatrix}, \hspace{2mm} 
\gamma^{k} = i \begin{pmatrix}
0 & \sigma^{k} \\
\sigma^{k} & 0
\end{pmatrix}, \hspace{2mm} k=1,2,3
\end{equation}
where the Pauli matrices are given by:
\begin{equation}
\sigma^{0}=\begin{pmatrix}
1&0 \\
0 & 1
\end{pmatrix}, \hspace{1mm} 
\sigma^{1}= \begin{pmatrix}
1 & 0 \\
0 & -1
\end{pmatrix}, \hspace{1mm}
\sigma^{2}= \begin{pmatrix}
0 & 1 \\
1 & 0
\end{pmatrix}, \hspace{1mm}
\sigma^{3}= \begin{pmatrix}
0 & -i \\
i & 0
\end{pmatrix}.
\end{equation}
We thus obtain:
\begin{equation}
\gamma^{0}\gamma^{1} = \begin{pmatrix} 
-\sigma^{1}& 0 \\
0& \sigma^{1} \\
\end{pmatrix}; \hspace{2mm} \gamma^{0}\gamma^{2} = \begin{pmatrix}
-\sigma^{2}& 0 \\
0 &\sigma^{2}
\end{pmatrix};\hspace{2mm} \gamma^{0} \gamma^{3} = \begin{pmatrix}
-\sigma^{3}& 0 \\
0& \sigma^{3}
\end{pmatrix}.\label{refGamma0,1,2,3}
\end{equation}
We introduce the matrix:
\begin{equation}
\gamma^{5} = -i\gamma^{0}\gamma^{1}\gamma^{2} \gamma^{3}
\end{equation}
which satisfies the relations:
\begin{equation}
\gamma^{5}\gamma^{\mu} + \gamma^{\mu}\gamma^{5} = 0,\hspace{5mm} 0\leq \mu \leq 3. \label{relGamma5}
\end{equation}
We make the change of spinor $\phi(t,x,\theta,\varphi) = r F(r)^{\frac{1}{4}} \psi (t,r,\theta,\varphi)$ and obtain the following equation:
\begin{equation}
\partial_{t} \phi = i \left(i\gamma^{0}\gamma^{1} \partial_{x}  + i\frac{F(r)^{\frac{1}{2}}}{r} \cancel{D}_{\mathbb{S}^{2}} -m \gamma^{0} F(r)^{\frac{1}{2}} \right) \phi.
\end{equation}
We set:
\begin{equation}
H_{m} = i\gamma^{0}\gamma^{1} \partial_{x}  + i\frac{F(r)^{\frac{1}{2}}}{r} \cancel{D}_{\mathbb{S}^{2}} -m \gamma^{0} F(r)^{\frac{1}{2}}.
\end{equation}
We introduce the Hilbert space:
\begin{equation}
\mathcal{H} := \left [L^{2}\pare{\left]-\infty,0\right[_{x} \times S^{2}_{\omega}, dx d\omega} \right ]^{4}
\end{equation}
Recall that $r$ is now a function of $x$. Using the spinoidal spherical harmonics (see \cite{IR14} for the details), we are able to diagonalize the Dirac operator on the sphere and we obtain the following operator:
\begin{equation*}
H_{m} = i\gamma^{0}\gamma^{1} \partial_{x}  + \gamma^{0}\gamma^{2}\frac{F(r)^{\frac{1}{2}}}{r} \pare{s+\frac{1}{2}} -m \gamma^{0} F(r)^{\frac{1}{2}}.
\end{equation*}
The corresponding Hilbert space is now:
\begin{equation}
\mathcal{H}_{s,n} := \left [L^{2}\pare{\left]-\infty,0\right[_{x}} \right]^{4}\otimes Y_{s,n}
\end{equation}
where $Y_{s,n}$ span the corresponding spinoidal spherical harmonic for harmonic $s$ and $n$ fixed.\\
Next, we introduce a semi-classical parameter $h = \frac{1}{s+\frac{1}{2}}$. In the remaining of this article, we will deal with the following operator:
\begin{equation}
H = hH_{m} 
= i\gamma^{0}\gamma^{1} h\partial_{x}  + \gamma^{0}\gamma^{2}\frac{F(r)^{\frac{1}{2}}}{r} -hm \gamma^{0} F(r)^{\frac{1}{2}}
\end{equation}
where we have omitted to put the indice $m$ for convenience. It was proven in \cite{IR14}, that this operator is self-adjoint for all positive masses when equipped with the appropriate domain. In the sequel, we will be interested in the case where the mass is sufficiently large. The domain of self-adjointness of our operator is then its natural domain, namely:
\begin{equation*}
D\pare{H} = \{\varphi \in \mathcal{H}_{s,n} \lvert H\varphi \in \mathcal{H}_{s,n} \}.
\end{equation*}

\section{Quasimodes}

In this section, we construct exponentially accurate quasimodes for our Dirac operator. We will follow the same path as the one described in \cite{Gan14}.

\subsection{Description of the operators}

We introduce new comparison operator in the same spirit as in \cite{Gan14}. First, recall that
\begin{equation*}
H = i\gamma^{0}\gamma^{1}h \partial_{x} + \gamma^{0}\gamma^{2} A\pare{x} - h m \gamma^{0} B\pare{x}.
\end{equation*}
Here, $A$ and $B$ behave at $0$ like:
\begin{equation*}
 A\pare{x} = \frac{1}{l} + \frac{x^{2}}{2l^{3}} + o\pare{x^{2}}, \hspace{5mm} B\pare{x} = -\frac{l}{x} - \frac{x}{6l} + o\pare{x}. 
\end{equation*}
Thus:
\begin{equation*}
 A^{2}\pare{x}  = \frac{1}{l^{2}} + \frac{x^{2}}{l^{4}} + o\pare{x^{2}}, \hspace{5mm}
B^{2}\pare{x}  = \frac{l^{2}}{x^{2}} + \frac{1}{3l} + o\pare{1},
\end{equation*}
and:
\begin{equation*}
 A'\pare{x}  = \frac{x}{l^{3}} + o\pare{x}, \hspace{5mm}
B'\pare{x} = \frac{l}{x^{2}} - \frac{1}{6l} + o\pare{1}.
\end{equation*}
We will consider the case $ml>1$ for which the domain of our operator $H$ is:
\begin{equation*}
D\pare{H} = \{ \varphi \in L^{2} \lvert H \varphi \in L^{2} \}.
\end{equation*}
Furthermore, we have:
\begin{equation*}
\pare{\gamma^{0}}^{2} = Id, \hspace{1mm} \pare{\gamma^{j}}^{2} = -Id, \hspace{1mm} \gamma^{\mu}\gamma^{\nu} = - \gamma^{\nu}\gamma^{\mu},
\end{equation*}
pour $1\leq j \leq 3$ et $0 \leq \mu, \nu \leq 3$. The square of $H$ is:
\begin{equation*}
P = H^{2}  = -h^{2}\partial_{x}^{2} +V\pare{x},
\end{equation*}
where:
\begin{equation} \label{ExprV}
V\pare{x} = A^{2}\pare{x} + h^{2}m^{2}B^{2}\pare{x}- ih \gamma^{1}\gamma^{2} A'\pare{x} + ih^{2}m \gamma^{1} B'\pare{x}.
\end{equation}
We equip $P$ with the domain:
\begin{equation*}
D\pare{P} = \{\varphi \in L^{2} \lvert H\varphi \in L^{2}, P\varphi \in L^{2} \}
\end{equation*}
on which it is well-defined and self-adjoint. We have $\gamma^{0} \gamma^{1} = diag\pare{-1,1,1,-1}$ and:
\begin{flalign*}
 \gamma^{1}\gamma^{2} = \begin{pmatrix}
0 & 1 & 0 & 0 \\
-1 & 0 & 0 & 0 \\
0 & 0 & 0 & 1 \\
0 & 0 & -1 & 0
\end{pmatrix}, & \gamma^{1} = i \begin{pmatrix}
0 & 0 & 1 & 0 \\
0 & 0 & 0 & -1 \\
1 & 0 & 0 & 0 \\
0 & -1 & 0 & 0
\end{pmatrix}, \\
\gamma^{0}\gamma^{2} = \begin{pmatrix}
0 & -1 & 0 & 0 \\
-1 & 0 & 0 & 0 \\
0 & 0 & 0 & 1 \\
0 & 0 & 1 & 0
\end{pmatrix}, & \gamma^{0} = \begin{pmatrix}
0 & 0 & i & 0 \\
0 & 0 & 0 & i \\
-i & 0 & 0 & 0 \\
0 & -i & 0 & 0 
\end{pmatrix}.
\end{flalign*}
First, we study elements of the domain of $P$. 

\begin{prop}
Let $\varphi \in D\pare{P}$. Near $0$, we obtain the following behavior:
\begin{enumerate}
\item[-] If $2ml \neq 3$, $\norme{\varphi\pare{x}}_{\C^{4}} = O\pare{\pare{-x}^{\min \pare{\frac{3}{2},ml}}}$.
\item[-] If $2ml = 3$, $\norme{\varphi\pare{x}}_{\C^{4}}= O\pare{\max \pare{\pare{-x}^{\frac{3}{2}}, \pare{-x}^{ml}\pare{-\ln\pare{-x}}}}$.
\end{enumerate}
We also obtain the following estimate for $\partial_{x}\varphi$:
\begin{flalign*}
\norme{\partial_{x} \varphi\pare{x}} & = \begin{cases}
O\pare{\max\pare{\pare{-x}^{\frac{1}{2}}, \pare{-x}^{\min \pare{\frac{3}{2},ml}-1}}},  \hspace{2mm} \text{if} \hspace{2mm} 2ml \neq 3, \\
O\pare{\max\pare{\pare{-x}^{\frac{1}{2}}, \pare{-x}^{ml-1}\pare{-\ln\pare{-x}}}},  \hspace{2mm} \text{if} \hspace{2mm} 2ml = 3.
\end{cases}
\end{flalign*}
\end{prop}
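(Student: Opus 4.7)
The plan is to analyze the equation $P\varphi = g$ with $g = P\varphi\in L^{2}$ by extracting the most singular part of $V$ near $x=0$, diagonalising the resulting matrix coefficient, and running a Frobenius-style analysis on each decoupled scalar equation.

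Using the asymptotic expansions given for $A,B,A',B'$, only the terms $h^{2}m^{2}B^{2}$ and $ih^{2}m\gamma^{1}B'$ contribute a $1/x^{2}$ singularity in \eqref{ExprV}; combining their leading contributions yields
\[
V(x)\;=\;\frac{h^{2}\,ml\,(ml+i\gamma^{1})}{x^{2}}+V_{0}(x),
\]
with $V_{0}$ bounded on a left neighbourhood of $0$. Since $(i\gamma^{1})^{2}=\mathrm{Id}$, the matrix $i\gamma^{1}$ has spectrum $\{\pm 1\}$, each of multiplicity two, so decomposing $\varphi=\varphi_{+}+\varphi_{-}$ along its eigenspaces, the equation reduces, up to a bounded coupling term, to the two scalar Euler-type equations
\[
-h^{2}\varphi_{\pm}''(x)+\frac{h^{2}\,ml\,(ml\pm 1)}{x^{2}}\,\varphi_{\pm}(x)\;=\;\tilde g_{\pm}(x),
\]
where $\tilde g_{\pm}=g_{\pm}-(V_{0}\varphi)_{\pm}$ lies in $L^{2}$ locally.

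Each scalar equation has indicial equation $\mu(\mu-1)=ml(ml\pm 1)$, whose roots are $\{ml+1,-ml\}$ and $\{ml,1-ml\}$. Under the standing assumption $ml>1$, only the positive roots are compatible with the $L^{2}$ condition near $0$; the smaller admissible root is $ml$, giving a homogeneous contribution of order $(-x)^{ml}$. For the particular solution I would use the Green's function built from the two Frobenius modes and apply Cauchy--Schwarz: an $L^{2}$ source integrated in the variation-of-parameters formula produces a contribution of order $(-x)^{3/2}$, matching the elementary bound $\abs{f(x)}\leq C(-x)^{3/2}\norme{g}_{L^{2}}$ for $-h^{2}f''=g$ with $g\in L^{2}$. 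Combining the two gives $\norme{\varphi(x)}_{\C^{4}}=O\pare{(-x)^{\min(3/2,ml)}}$, which is the first claim when $2ml\neq 3$. The derivative bound then follows by term-by-term differentiation of the representation formulas: the homogeneous Frobenius mode contributes $(-x)^{ml-1}$ and the Cauchy--Schwarz estimate applied to one primitive of an $L^{2}$ function gives $(-x)^{1/2}$, so the sum is controlled by the maximum of the two.

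The critical case $2ml=3$ is a genuine resonance: the Frobenius rate $(-x)^{3/2}$ coincides with the rate produced by the inhomogeneous $L^{2}$ contribution, and the variation-of-parameters integral then forces the $-\ln(-x)$ correction stated in the proposition, which propagates into the derivative bound. The main technical obstacle is the bounded coupling induced by $V_{0}$ between the two eigenspaces of $i\gamma^{1}$, because the scalar reduction is only approximate; this is handled by a short bootstrap: the crude bound $\varphi\in L^{2}$ controls $V_{0}\varphi$ in $L^{2}$, so the Frobenius analysis applies to produce a first pointwise estimate which is then reinjected into $\tilde g_{\pm}$ to reach the optimal decay. The logarithmic case is absorbed by enlarging the Frobenius Ansatz to include a $\log$ factor.
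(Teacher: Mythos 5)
Your overall strategy—diagonalise the leading $x^{-2}$ matrix, reduce to scalar Euler-type equations, run a Frobenius analysis with indicial roots $\{ml+1,-ml\}$ and $\{ml,1-ml\}$, and estimate the particular solution by Cauchy--Schwarz—is a reasonable alternative route, and the $i\gamma^{1}$-eigenspaces you would use coincide with the combinations $\varphi_{1}\pm\varphi_{3}$, $\varphi_{2}\pm\varphi_{4}$ that the paper forms. However, there is a genuine gap in the step where you discard the ``bad'' Frobenius roots. You write that ``only the positive roots are compatible with the $L^{2}$ condition near $0$,'' but this is false precisely in the regime $1<ml<\tfrac{3}{2}$: there $1-ml\in(-\tfrac{1}{2},0)$, so $(-x)^{1-ml}\in L^{2}$ near $0$. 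If this mode is allowed, your argument only gives the weaker bound $O\!\left((-x)^{1-ml}\right)$, contradicting the stated $O\!\left((-x)^{ml}\right)$. The correct reason this mode must be excluded is the additional constraint built into $D(P)$: one requires $H\varphi\in L^{2}$, not merely $P\varphi\in L^{2}$. Acting with $H$ on $(-x)^{1-ml}$ in the relevant $i\gamma^{1}$-eigenspace produces $(-x)^{-ml}$, which is not in $L^{2}$ since $ml>\tfrac{1}{2}$, so the mode is ruled out by the first-order, not the $L^{2}$, condition. Your purely second-order analysis never sees this; the bootstrap you sketch (which only reuses $\varphi\in L^{2}$) does not repair it.

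This is exactly where the paper's approach diverges and is cleaner: instead of analysing $P\varphi=g\in L^{2}$, it writes the first-order equation $H\varphi=f$ and uses the pointwise estimate $\norme{f(x)}=O\!\left(\sqrt{-x}\right)$ (which follows from $H\varphi\in D(H)$ and the estimate \eqref{705} from \cite{IR14}). It then solves the decoupled scalar first-order equations with explicit integrating factors $e^{\pm\int mB}$, which makes the admissible homogeneous mode $(-x)^{ml}$ and the particular mode $(-x)^{3/2}$ (or $(-x)^{ml}(-\ln(-x))$ in the resonant case) appear directly, with the inadmissible modes automatically excluded by $\varphi\in D(H)$. To salvage your approach you would need to either (i) invoke the $H\varphi\in L^{2}$ constraint explicitly to kill the $(-x)^{1-ml}$ mode, or (ii) replace the input $g\in L^{2}$ by the stronger pointwise input $\norme{H\varphi(x)}=O\!\left(\sqrt{-x}\right)$ from \cite{IR14}, after which the Frobenius/variation-of-parameters bookkeeping does close up.
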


\begin{proof}
Let $\varphi \in D\pare{P}$. Since $H^{2} \varphi \in L^{2}$ and using a result from \cite{IR14} saying that, for $\psi \in D\pare{H}$ and $m > \frac{1}{2l}$, we have
\begin{equation}
|| \psi(x,.)||_{L^2(S^2)} = O \left( \sqrt{-x} \right), \hspace{2mm} x \to 0,\label{705}
\end{equation}
we deduce that:
\begin{equation*}
\norme{H\varphi\pare{x}} = O\pare{\pare{-x}^{\frac{1}{2}}}
\end{equation*}
near $0$. Thus, we can write:
\begin{equation*}
H \varphi \pare{x} = f\pare{x}
\end{equation*}
with $f\pare{x} = O\pare{\pare{-x}^{\frac{1}{2}}}$. We obtain the following equations:
\begin{flalign*}
\begin{cases}
-i h \partial_{x} \varphi_{1} - A\pare{x} \varphi_{2} - ihm B\pare{x} \varphi_{3} = f_{1} \\
ih\partial_{x} \varphi_{2} - A\pare{x} \varphi_{1} - ihm B\pare{x} \varphi_{4} = f_{2} \\
ih \partial_{x} \varphi_{3} + A\pare{x} \varphi_{4} +ihm B\pare{x} \varphi_{1} = f_{3} \\
-ih \partial_{x} \varphi_{4} + A\pare{x} \varphi_{3} + ihm B\pare{x} \varphi_{2} = f_{4}.
\end{cases}
\end{flalign*}
This gives the two following systems:
\begin{flalign}
\hphantom{A} & \begin{cases}
-ih\partial_{x} \pare{\varphi_{1} + \varphi_{3}}-ihm B\pare{x} \pare{\varphi_{1} + \varphi_{3}} = g_{1} \\
-ih\partial_{x} \pare{\varphi_{1} - \varphi_{3}} + ihm B\pare{x} \pare{\varphi_{1} - \varphi_{3}} = g_{2} \label{H21}
\end{cases} \\
& \begin{cases}
ih \partial_{x} \pare{\varphi_{2} + \varphi_{4}}- ihm B\pare{x} \pare{\varphi_{2} + \varphi_{4}} = g_{3} \\
ih \partial_{x} \pare{\varphi_{2} - \varphi_{4}} + ihm B\pare{x} \pare{\varphi_{2} - \varphi_{4}} = g_{4},
\end{cases}
\end{flalign}
where:
\begin{flalign*}
g_{1}\pare{x} = A\pare{x} \pare{\varphi_{2}+\varphi_{4}} + f_{1} - f_{3}, 
& \hspace{5mm} g_{2}\pare{x} = A\pare{x} \pare{\varphi_{2} - \varphi_{4}} + f_{1} + f_{3},\\
g_{3}\pare{x} = A\pare{x} \pare{\varphi_{1} + \varphi_{3}} + f_{2} - f_{4},
& \hspace{5mm} g_{4}\pare{x} = A\pare{x} \pare{\varphi_{1} - \varphi_{3}} + f_{2} + f_{4}.
\end{flalign*}
Since $A$ is bounded near $0$ and $H\varphi \in L^{2}$, we have $\norme{\varphi\pare{x}} = O\pare{\pare{-x}^{\frac{1}{2}}}$. Thus, $\norme{g\pare{x}} = O\pare{\pare{-x}^{\frac{1}{2}}}$. We first study equations $\eqref{H21}$.\\
\underline{First equation}\\
Here, we study the equation:
\begin{equation*}
\partial_{x} \pare{\varphi_{1} + \varphi_{3}} + m B\pare{x} \pare{\varphi_{1} + \varphi_{3}} = \frac{i}{h} g_{1}.
\end{equation*}
We write $\varphi_{1,3}^{+}= \varphi_{1} + \varphi_{3}$. The homogeneous equation has a solution of the form:
\begin{equation*}
\varphi_{1,3}^{+,EH}\pare{x} = C e^{-\int_{-\epsilon}^{x} m B\pare{y} dy }
\end{equation*}
where $\epsilon > 0$ and $C$ is a constant. With a particular solution of the form:
\begin{equation*}
\varphi_{1,3}^{+,p}\pare{x} = C\pare{x} e^{-\int_{-\epsilon}^{x} m B\pare{y} dy },
\end{equation*}
we have:
\begin{equation*}
C\pare{x} = \int_{-\epsilon}^{x} \frac{i}{h} g\pare{y} e^{\int_{-\epsilon}^{y} m B\pare{z} dz } dy.
\end{equation*}
Since, near $0$, we have:
\begin{equation*}
B\pare{x} = -\frac{l}{x} - \frac{x}{6l} + o\pare{x},
\end{equation*}
so that:
\begin{equation*}
e^{\int_{-\epsilon}^{x} m B\pare{z} dz} = \pare{-x}^{-ml}O\pare{1},
\end{equation*}
we obtain:
\begin{flalign*}
\abs{C\pare{x}} & \leq C \int_{-\epsilon}^{x} \frac{1}{h} \abs{\pare{-y}^{\frac{1}{2}}} \abs{\pare{-y}^{-ml}} dy \\
& = \begin{cases}
\frac{C}{h} \left [ \frac{\pare{-y}^{\frac{3}{2} - ml}}{\frac{3}{2} - ml} \right ]_{-\epsilon}^{x}, \hspace{2mm} \text{if} \hspace{2mm} ml \neq \frac{3}{2} \\
\frac{C}{h} \left [ - \ln\pare{-y} \right ]_{-\epsilon}^{x}, \hspace{2mm} \text{if} \hspace{2mm} ml= \frac{3}{2}.
\end{cases}
\end{flalign*}
With $e^{-\int_{-\epsilon}^{x} m B\pare{y} dy} = \pare{-x}^{ml} O\pare{1}$, we obtain:
\begin{flalign*}
\begin{cases}
\text{If} \hspace{2mm} 2ml \neq 3, \hspace{2mm} \varphi_{1,3}^{+,p}\pare{x} = O\pare{\pare{-x}^{\frac{3}{2}}},\\
\text{If} \hspace{2mm} 2ml = 3, \hspace{2mm} \varphi_{1,3}^{+,p}\pare{x} = O \pare{\pare{-x}^{ml}\pare{-\ln\pare{-x}}}.
\end{cases}
\end{flalign*}
Moreover, $\varphi_{1,3}^{+,EH}\pare{x} = O\pare{\pare{-x}^{ml}}$, which gives:
\begin{enumerate}
\item[-] If $2ml \neq 3$, $\varphi_{1,3}^{+}\pare{x} = O\pare{\pare{-x}^{\min \pare{\frac{3}{2}, ml}}}$.
\item[-] If $2ml = 3$, $\varphi_{1,3}^{+}\pare{x} = O\pare{\pare{-x}^{ml}\pare{-\ln\pare{-x}}}$.
\end{enumerate}
\underline{Second equation}\\
We write $\varphi_{1,3}^{-} = \varphi_{1} - \varphi_{3}$. We have the equation:
\begin{equation*}
\partial_{x} \varphi_{1,3}^{-} \pare{x} - m B\pare{x} \varphi_{1,3}^{-}\pare{x} = \frac{i}{h}g_{2}\pare{x}.
\end{equation*}
As before, a solution of the homogeneous equation is:
\begin{equation*}
\varphi_{1,3}^{-,EH}\pare{x} = C e^{\int_{-\epsilon}^{x} m B\pare{y} dy}
\end{equation*}
with a constant $C$. This solution behaves like $\varphi_{1,3}^{-,EH}\pare{x} = C\pare{-x}^{-ml}O\pare{1}$. A particular solution is:
\begin{equation*}
\varphi_{1,3}^{-,p}\pare{x} = e^{\int_{-\epsilon}^{x} m B\pare{y} dy} \int_{0}^{x} \frac{i}{h} g_{2}\pare{y} e^{-\int_{-\epsilon}^{y} m B\pare{z} dz} dy.
\end{equation*}
Here, we have $e^{-\int_{-\epsilon}^{y} m B\pare{z} dz} = \pare{-y}^{ml}O\pare{1}$ which gives:
\begin{equation*}
\abs{\varphi_{1,3}^{-,p}\pare{x}}  \leq C_{1} \pare{-x}^{-ml} \int_{0}^{x} \frac{1}{h} \pare{-y}^{\frac{1}{2}+ ml} dy \leq C_{1} \pare{-x}^{\frac{3}{2}}
\end{equation*}
Since $\varphi \in D\pare{P}$, we have $H\varphi \in L^{2}$ and $\norme{\varphi\pare{x}} = O\pare{\pare{-x}^{\frac{1}{2}}}$ using a result in \cite{IR14}. It implies that $C=0$. We then have:
\begin{equation*}
\varphi_{1,3}^{-}\pare{x} = O\pare{\pare{-x}^{\frac{3}{2}}}.
\end{equation*}
This gives the estimates for $\varphi_{1}$ and $\varphi_{3}$:
\begin{enumerate}
\item[-] If $2ml \neq 3$, $\varphi_{1}\pare{x} = O\pare{\pare{-x}^{\min \pare{\frac{3}{2},ml}}}$ and $\varphi_{3}\pare{x} = O\pare{\pare{-x}^{\min \pare{\frac{3}{2},ml}}}$.
\item[-] If $2ml = 3$, $\varphi_{1}\pare{x} = O\pare{\max \pare{\pare{-x}^{\frac{3}{2}}, \pare{-x}^{ml}\pare{-\ln\pare{-x}}}}$ and the same estimate for $\varphi_{3}$.
\end{enumerate}
We can do the same with $\varphi_{2}$ et $\varphi_{4}$. Using that, if $\varphi \in D\pare{P}$, we have $H \varphi \in D\pare{H}$ and $H\varphi\pare{x} = O\pare{\pare{-x}^{\frac{1}{2}}}$ we deduce that:
\begin{equation*}
i \gamma^{0} \gamma^{1} \partial_{x} \varphi\pare{x} = - \gamma^{0}\gamma^{2} A\pare{x} \varphi\pare{x} + hm \gamma^{0} B\pare{x} \varphi\pare{x} + O\pare{\pare{-x}^{\frac{1}{2}}}.
\end{equation*}
With the previous estimates, we obtain:
\begin{flalign*}
- \gamma^{0}\gamma^{2} A\pare{x} \varphi\pare{x} & = O\pare{\pare{-x}^{\frac{1}{2}}} \\
hm\gamma^{0} B\pare{x} \varphi\pare{x} & = \begin{cases} O \pare{\pare{-x}^{\min \pare{\frac{3}{2},ml}-1}}, \hspace{2mm} \text{if} \hspace{2mm} 2ml \neq 3, \\
O\pare{\max\pare{\pare{-x}^{\frac{1}{2}}, \pare{-x}^{ml-1}\pare{-\ln\pare{-x}}}},  \hspace{2mm} \text{if} \hspace{2mm} 2ml = 3.
\end{cases}
\end{flalign*}
So:
\begin{flalign*}
\norme{\partial_{x} \varphi\pare{x}} & = \begin{cases}
O\pare{\max\pare{\pare{-x}^{\frac{1}{2}}, \pare{-x}^{\min \pare{\frac{3}{2},ml}-1}}},  \hspace{2mm} \text{if} \hspace{2mm} 2ml \neq 3, \\
O\pare{\max\pare{\pare{-x}^{\frac{1}{2}}, \pare{-x}^{ml-1}\pare{-\ln\pare{-x}}}},  \hspace{2mm} \text{if} \hspace{2mm} 2ml = 3.
\end{cases}
\end{flalign*}
\end{proof}
\begin{rem}
In every case, we have:
\begin{equation*}
\prods{\partial_{x}\varphi\pare{x}}{\varphi\pare{x}}_{\C^{4}} \underset{x \to 0}{\to} 0.
\end{equation*}
\end{rem}
We now consider $P$ on $J= ]-c,0[$ where $c$ is to be determined. We denote $P_{J}$ this operator and equip it with the domain:
\begin{equation*} \label{DomPComp}
D\pare{P_{J}} = \{\varphi \in L^{2}\pare{J} \lvert H\varphi \in L^{2}\pare{J}, P_{J}\varphi \in L^{2}\pare{J}, \varphi\pare{-c} = 0 \}.
\end{equation*} 
We have the same lemmas as in \cite{Gan14}:
\begin{lem}
Let $\varphi \in D\pare{P_{J}}$ and $Y\pare{x}$ such that $Y\pare{x} = O\pare{x}$, $Y'\pare{x} = O\pare{1}$ at $0$.
Then:
\begin{equation*}
\norme{-ih\partial_{x} \varphi - i FY \varphi}^{2} = \prods{-h^{2} \partial_{x}^{2} \varphi}{\varphi}+ \prods{F^{2}Y^{2}\varphi-hF\partial_{r}\pare{FY}\varphi}{\varphi}
\end{equation*}
where $F\pare{r} = 1 - \frac{2M}{r} + \frac{r^{2}}{l^{2}}$.
\end{lem}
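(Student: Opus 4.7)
The identity is an integration-by-parts (or commutator) computation: with $L := h\partial_x + FY$, one has
\[
\norme{-ih\partial_x\varphi - iFY\varphi}^2 = \norme{L\varphi}^2 = \prods{L^{*}L\varphi}{\varphi},
\]
and the plan is to show $L^{*}L = -h^{2}\partial_x^{2} + F^{2}Y^{2} - hF\partial_r(FY)$ as an operator identity on $D(P_J)$ (after discarding boundary contributions). Formally, using $FY$ real,
\[
L^{*}L\varphi = (-h\partial_x + FY)(h\partial_x + FY)\varphi = -h^{2}\partial_x^{2}\varphi + F^{2}Y^{2}\varphi - h\partial_x(FY)\varphi,
\]
since the cross terms $\mp hFY\partial_x\varphi$ cancel. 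Rewriting $\partial_x(FY) = F\partial_r(FY)$ via $\partial_x = F\partial_r$ delivers the stated right-hand side.

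The work is therefore concentrated in justifying the two integrations by parts that produce boundary contributions at $x=-c$ and $x=0$. First I would expand $\norme{L\varphi}^{2}$ directly as
\[
\norme{L\varphi}^{2} = \int_J |h\partial_x\varphi|^{2}\,dx + h\int_J FY\,\partial_x\bigl(\norme{\varphi}_{\C^{4}}^{2}\bigr)\,dx + \int_J F^{2}Y^{2}\norme{\varphi}_{\C^{4}}^{2}\,dx,
\]
and integrate by parts in the first two integrals. The first one yields $\prods{-h^{2}\partial_x^{2}\varphi}{\varphi} + \bigl[h^{2}\prods{\partial_x\varphi}{\varphi}_{\C^{4}}\bigr]_{-c}^{0}$; the second one yields $-h\int_J (FY)'\norme{\varphi}_{\C^{4}}^{2}\,dx + \bigl[hFY\norme{\varphi}_{\C^{4}}^{2}\bigr]_{-c}^{0}$.

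Both boundary terms at $x=-c$ vanish because of the Dirichlet condition $\varphi(-c)=0$ built into $D(P_J)$. At $x=0$ the first boundary term vanishes by the Remark preceding the lemma, which asserts precisely $\prods{\partial_x\varphi(x)}{\varphi(x)}_{\C^{4}} \to 0$. For the second boundary term I would use the hypothesis $Y(x) = O(x)$, together with the asymptotic $r(x) \sim -l^{2}/x$ (coming from $\frac{dx}{dr} = F^{-1} \sim l^{2}/r^{2}$), which gives $F(r(x)) = O(x^{-2})$ and hence $FY(x) = O(x^{-1})$; the decay $\norme{\varphi(x)}_{\C^{4}}^{2} = O(-x)^{2\min(3/2,ml)}$ from the preceding Proposition (together with $ml>1$) then forces $FY\norme{\varphi}_{\C^{4}}^{2} \to 0$ at $x=0$. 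Once all four boundary terms are discarded and $(FY)'$ is rewritten as $F\partial_r(FY)$, the asserted identity follows.

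The only subtle point, and what I would flag as the main obstacle, is this boundary control at $x=0$: it is essential that the Agmon-type decay of $\varphi$ near the Anti-de Sitter infinity, combined with the mild growth built into the hypothesis $Y(x) = O(x)$, dominates the blow-up of $F$. Had $Y$ been merely bounded, the second boundary term would not vanish in general; that is exactly why the hypothesis is phrased as $Y(x) = O(x)$, $Y'(x) = O(1)$.
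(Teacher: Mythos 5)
Your proposal is correct and follows essentially the same route as the paper: expand the squared norm, integrate by parts twice, kill the boundary terms at $x=-c$ via the Dirichlet condition and at $x=0$ via the preceding Remark together with the decay of $\varphi$ against $FY = O\pare{x^{-1}}$, and finally rewrite $\partial_x = F\partial_r$. The $L^{*}L$ framing is only a presentational wrapper; the underlying computation and the boundary-term justifications are the same as the paper's.
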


\begin{proof}
For $\varphi \in D\pare{P_{J}}$, we have:
\begin{flalign*}
\norme{-ih\partial_{x} \varphi - i FY \varphi}^{2} & = \prods{-ih\partial_{x} \varphi - i FY \varphi}{-ih\partial_{x} \varphi - i FY \varphi} \\
& = \prods{-h^{2} \partial_{x}^{2} \varphi}{\varphi} - h \prods{\partial_{x}\pare{FY} \varphi}{\varphi} + \prods{F^{2}Y^{2}\varphi}{\varphi},
\end{flalign*}
where the first integration by part is justified by $\prods{\partial_{x} \varphi\pare{x}}{\varphi\pare{x}}_{\C^{4}} \underset{x \to 0}{\to} 0$. In a neighborhood of $0$, we have $F\pare{r\pare{x}} = O\pare{\pare{-x}^{-2}}$, $Y\pare{x} = O\pare{-x}$ et $\varphi\pare{x} = o\pare{\pare{-x}^{\frac{1}{2}}}$ which give $\prods{FY\varphi\pare{x}}{\varphi\pare{x}}_{\C^{4}\lvert x=0} = 0$ and justify the other integration by part. Since $\partial_{x} = \frac{\partial r}{\partial x} \partial_{r}= F\pare{r} \partial_{r}$ we obtain:
\begin{equation*}
\norme{-ih\partial_{x} \varphi - i FY \varphi}^{2} = \prods{-h^{2} \partial_{x}^{2} \varphi}{\varphi}+ \prods{F^{2}Y^{2}\varphi-hF\partial_{r}\pare{FY}\varphi}{\varphi}.
\end{equation*}
\end{proof}

\begin{lem}
Let $r_{+}$ such that, for $r \geq r_{+}$, we have $\frac{F\pare{r}}{4l^{2}} - \frac{F'\pare{r}^{2}}{16} \geq 0$ where $F\pare{r} = 1 - \frac{2M}{r} + \frac{r^{2}}{l^{2}}$. Denote by $x_{+} = x\pare{r_{+}}$ and $P^{+} = P_{]x_{+},0[}$. Let $\varphi \in D\pare{P^{+}}$, then:
\begin{equation*}
\prods{\pare{-h^{2}\partial_{x}^{2} + h^{2}m^{2} B^{2}\pare{x} + i h^{2}m \gamma^{1} B'\pare{x}}\varphi}{\varphi} \geq 0.
\end{equation*}
\end{lem}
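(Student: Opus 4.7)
The idea is to apply the preceding lemma with the scalar choice $Y\pare{x} := hm\,F\pare{r\pare{x}}^{-\frac{1}{2}}$ and then absorb the matrix term $ih^{2}m\gamma^{1}B'$ using an algebraic property of $\gamma^{1}$. Near $x=0$ we have $F\sim l^{2}/x^{2}$ and hence $F^{-\frac{1}{2}}\sim \pare{-x}/l$, so $Y\pare{x} = O\pare{x}$ and $Y'\pare{x} = O\pare{1}$, verifying the hypotheses of the preceding lemma. Using $B = F^{\frac{1}{2}}$ together with $B' = BF'/2$, a short computation gives $FY = hmB$, $F^{2}Y^{2} = h^{2}m^{2}B^{2}$ and $hF\,\partial_{r}\pare{FY} = h^{2}mB'$, so the preceding lemma yields
$$\prods{\pare{-h^{2}\partial_{x}^{2} + h^{2}m^{2}B^{2}}\varphi}{\varphi} = \norme{-ih\partial_{x}\varphi - ihmB\varphi}^{2} + \prods{h^{2}mB'\varphi}{\varphi}.$$

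Adding $\prods{ih^{2}m\gamma^{1}B'\varphi}{\varphi}$ to both sides, the two $B'$-terms on the right regroup into
$$h^{2}m\int_{x_{+}}^{0} B'\pare{x}\,\prods{\pare{I + i\gamma^{1}}\varphi\pare{x}}{\varphi\pare{x}}_{\C^{4}}\,dx,$$
which is non-negative for two independent reasons. Algebraically, $\pare{\gamma^{1}}^{*} = -\gamma^{1}$ and $\pare{\gamma^{1}}^{2} = -I$ show that $i\gamma^{1}$ is Hermitian with $\pare{i\gamma^{1}}^{2} = I$, so $\tfrac{1}{2}\pare{I + i\gamma^{1}}$ is an orthogonal projector on $\C^{4}$ and $\prods{\pare{I + i\gamma^{1}}\psi}{\psi}_{\C^{4}} \geq 0$ for every $\psi\in\C^{4}$. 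Geometrically, $F'\pare{r} = 2M/r^{2} + 2r/l^{2} > 0$ throughout the exterior region, so $B'\pare{x} = B\pare{x}F'\pare{r\pare{x}}/2 \geq 0$ on $\left]x_{+},0\right[$. The product of two non-negative factors integrates to a non-negative number, and combined with the squared norm we obtain
$$\prods{\pare{-h^{2}\partial_{x}^{2} + h^{2}m^{2}B^{2} + ih^{2}m\gamma^{1}B'}\varphi}{\varphi} \geq 0.$$

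The main obstacle here is conceptual rather than technical: one has to recognize that the discrepancy between $-h^{2}mB'$ (the scalar term naturally produced by the preceding lemma with a scalar $Y$) and the target $+ih^{2}m\gamma^{1}B'$ packages itself into the positive semi-definite expression $h^{2}mB'\pare{I + i\gamma^{1}}$ thanks to the anticommutation relations of the Dirac matrices. Once this decomposition is spotted, the proof reduces to verifying the hypotheses $Y = O\pare{x}$, $Y' = O\pare{1}$ required by the preceding lemma and checking the sign of $F'$ on the exterior, both of which are immediate; no additional boundary analysis beyond what the preceding lemma already supplies is needed.
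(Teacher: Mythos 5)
Your proof is correct, and it takes a genuinely different and in fact cleaner route than the paper's. The paper first lower-bounds the potential part $m^{2}B^{2}+im\gamma^{1}\partial_{x}B$ by $-\frac{F}{4l^{2}}$ via an AM--GM inequality combined with the operator bound $i\gamma^{1}\geq -I$ and the defining hypothesis on $r_{+}$, and only then invokes the preceding lemma with $Y = h\frac{r-r_{SAdS}}{2l^{2}}F^{-1}$ to absorb $-h^{2}\frac{F}{4l^{2}}$ into a complete square. You instead feed $Y = hmF^{-\frac{1}{2}}$ directly into the preceding lemma, which reproduces exactly $h^{2}m^{2}B^{2}-h^{2}mB'$, and then observe that the leftover $h^{2}mB'\pare{I+i\gamma^{1}}$ is pointwise a non-negative scalar times a positive semi-definite matrix: both approaches rely on $i\gamma^{1}$ being Hermitian with spectrum $\{\pm1\}$, but where the paper uses $i\gamma^{1}\geq -I$ you use the equivalent fact $I+i\gamma^{1}\geq 0$ together with $B'\geq 0$ (from $F'>0$ in the exterior). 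A notable consequence is that your argument never invokes the condition $\frac{F}{4l^{2}}-\frac{F'^{2}}{16}\geq 0$ that defines $r_{+}$, so it proves the inequality on $]x_{*},0[$ for \emph{any} $x_{*}$ in the exterior region, which is strictly stronger than what is stated; the only constraint on $x_{+}$ that survives is the one needed later for the Agmon estimates.
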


\begin{proof}
We have $B^{2}\pare{x\pare{r}} = F\pare{r}$ and:
\begin{equation*}
\partial_{x}B\pare{x\pare{r}} = \partial_{x} F\pare{r}^{\frac{1}{2}} = F\pare{r} \partial_{r} F\pare{r}^{\frac{1}{2}} = \frac{1}{2} F'\pare{r} F\pare{r}^{\frac{1}{2}}.
\end{equation*}
Moreover:
\begin{equation*}
\frac{1}{2} F'\pare{r} F\pare{r}^{\frac{1}{2}}  \leq m F\pare{r} + \frac{F'\pare{r}^{2}}{16m}.
\end{equation*}
We deduce that:
\begin{flalign*}
m^{2} F\pare{r} + im \gamma^{1} \frac{1}{2} F'\pare{r} F\pare{r}^{\frac{1}{2}} & \geq m^{2} F\pare{r} - \frac{m}{2} F'\pare{r}F\pare{r}^{\frac{1}{2}} \\
& \geq - \frac{F'\pare{r}^{2}}{16} = - \frac{F'\pare{r}^{2}}{16} + \frac{F\pare{r}}{4l^{2}} - \frac{F\pare{r}}{4l^{2}},
\end{flalign*}
in the sense of quadratic forms. Since $F'\pare{r} = 2\pare{\frac{M}{r^{2}} + \frac{r}{l^{2}}}$ and $\frac{F\pare{r}}{4l^{2}} = \frac{1}{4l^{2}} - \frac{M}{2rl^{2}} + \frac{r^{2}}{4l^{2}}$, we obtain:
\begin{equation*}
\frac{F\pare{r}}{4l^{2}} - \frac{F'\pare{r}^{2}}{16} = \frac{1}{4l^{2}} - \frac{M}{rl^{2}} - \frac{M^{2}}{4r^{4}}.
\end{equation*}
Thus $\frac{F\pare{r}}{4l^{2}} - \frac{F'\pare{r}^{2}}{16} \geq 0$ for $r$ sufficiently large. Let $r_{+}$ such that, for $r \geq r_{+}$, we have $\frac{F\pare{r}}{4l^{2}} - \frac{F'\pare{r}^{2}}{16} \geq 0$.
Let $x_{+}= x\pare{r_{+}}$. On $[x_{+},0[$, we have:
\begin{equation*}
h^{2}m^{2}B^{2}\pare{x} + i h^{2}m\gamma^{1} \partial_{x}B\pare{x} \geq - h^{2}\frac{F\pare{r}}{4l^{2}}.
\end{equation*}
Consider $Y\pare{x\pare{r}} = h \frac{r-r_{SAdS}}{2l^{2}} F^{-1}\pare{r}$. Since $r = O\pare{\pare{-x}^{-1}}$ and $F\pare{r} = O\pare{\pare{-x}^{-2}}$, we have $Y\pare{x} = O\pare{x}$. Moreover:
\begin{equation*}
\partial_{x}Y\pare{x}  = F\pare{r} \partial_{r} \pare{h \frac{r-r_{SAdS}}{2l^{2}} F^{-1}\pare{r}} = \frac{h}{2l^{2}} - \frac{h}{2l^{2}}\pare{r-r_{SAdS}}\frac{F'\pare{r}}{F\pare{r}}.
\end{equation*}
Since $F'\pare{r\pare{x}} = O\pare{\pare{-x}^{-1}}$, we deduce that $\partial{x}Y\pare{x} = O\pare{1}$.\\
Furthermore, as in \cite{Gan14}, we have:
\begin{equation*}
\partial_{r}\pare{FY} = \frac{h}{2l^{2}}
\end{equation*}
and:
\begin{equation*}
F^{2}Y^{2} - h F\partial_{r}\pare{FY}  = h^{2}\frac{r^{2}- 2rr_{SAdS}+r^{2}_{SAdS}}{4l^{4}} - h^{2} \frac{F\pare{r}}{2l^{2}}  \leq - h^{2} \frac{F\pare{r}}{4l^{2}}.
\end{equation*}
We can now conclude:
\begin{flalign*}
\prods{\pare{-h^{2}\partial_{x}^{2} + h^{2}m^{2} B^{2}\pare{x} + i h^{2}m \gamma^{1} \partial_{x}B\pare{x}} \varphi}{\varphi} & \geq \prods{\pare{-h^{2}\partial_{x}^{2}-h^{2} \frac{F}{4l^{2}}} \varphi}{\varphi} \\
& \geq \prods{\pare{-h^{2}\partial_{x}^{2} + F^{2}Y^{2} - h F\partial_{r}\pare{FY}} \varphi}{\varphi} \\
& = \norme{-ih\partial_{x} \varphi - i FY \varphi}^{2} \geq 0.
\end{flalign*}
\end{proof}
Recall now the asymptotic behavior at $0$ of the potentials:
\begin{equation*}
A\pare{x} = \frac{1}{l} + \frac{x^{2}}{2l^{3}} + o\pare{x^{2}}, \hspace{5mm}
B\pare{x} = -\frac{l}{x} - \frac{x}{6l} + o\pare{x},
\end{equation*}
and:
\begin{flalign*}
A^{2}\pare{x}  = \frac{1}{l^{2}} + \frac{x^{2}}{l^{4}} + o\pare{x^{2}}, & \hspace{5mm}
B^{2}\pare{x}  = \frac{l^{2}}{x^{2}} + \frac{1}{3l} + o\pare{1},\\
A'\pare{x} = \frac{x}{l^{3}} + o\pare{x},& \hspace{5mm}
B'\pare{x} = \frac{l}{x^{2}} - \frac{1}{6l} + o\pare{1}.
\end{flalign*}
We introduce the operator:
\begin{equation*}
\tilde{H} = i\gamma^{0}\gamma^{1} h\partial_{x} + \gamma^{0}\gamma^{2} \pare{\frac{1}{l^{2}} + \frac{x^{2}}{l^{4}}}^{\frac{1}{2}} + hm \gamma^{0} \frac{l}{x},
\end{equation*}
and its square:
\begin{equation*}
\tilde{P}= \tilde{H}^{2} = -h^{2}\partial_{x}^{2} + \pare{\frac{1}{l^{2}} + \frac{x^{2}}{l^{4}}} - ih \gamma^{1}\gamma^{2} \frac{x}{l^{4}}\pare{\frac{1}{l^{2}} + \frac{x^{2}}{l^{4}}}^{-\frac{1}{2}} + h^{2}\pare{\frac{m^{2}l^{2}}{x^{2}} + i\gamma^{1} \frac{ml}{x^{2}}}
\end{equation*}
on $]-\infty,0[$ with domain:
\begin{equation*}
D\pare{\tilde{P}} = \{\varphi\in L^{2} \lvert \tilde{H}\varphi \in L^{2}, \tilde{P} \varphi \in L^{2} \}.
\end{equation*}
The operator $\tilde{H}$ has compact resolvent since its potential is confining. Moreover, we have:
\begin{equation*}
\pare{\tilde{P} - E^{2}}^{-1} = \pare{\tilde{H}+E}^{-1} \pare{\tilde{H} - E}^{-1},
\end{equation*}
which proves that $\tilde{P}$ has compact resolvent and that $E^{2}$ is an eigenvalue of $\tilde{P}$ if and only if $\pm E$ is an eigenvalue of $\tilde{H}$.

\begin{prop} \label{VpPtilde}
Let $\alpha_{1} = \frac{1+\sqrt{1+4ml\pare{ml+1}}}{2}$, $ E_{0}\pare{h}:= \frac{1}{l^{2}} - \frac{h}{2}$ and \\ $E_{2}\pare{h} = \frac{1}{l^{2}} + \pare{2\alpha_{1}+1}\pare{\frac{1}{l^{4}} + \frac{h}{2l^{6}}}^{\frac{1}{2}} h$. Then $\tilde{P}$ admit an eigenvalue $\tilde{E}\pare{h}$ with $E_{0}\pare{h} \leq \tilde{E}\pare{h} \leq E_{2}\pare{h}+\frac{h}{2}$.
\end{prop}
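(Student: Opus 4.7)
The plan is a min--max argument. Since the potential of $\tilde P$ is confining at both ends (harmonic $x^2/l^4$ at $-\infty$ and inverse-square $h^2m^2l^2/x^2$ at $0$), $\tilde P$ has compact resolvent and purely discrete spectrum, so it suffices to show that its lowest eigenvalue $\tilde E\pare{h}$ lies in $[E_0\pare{h},E_2\pare{h}+\tfrac{h}{2}]$. I would first establish $\tilde P\geq E_0\pare{h}$ by a pointwise estimate on the matrix-valued potential, and then exhibit a factored trial state whose Rayleigh quotient is at most $E_2\pare{h}+\tfrac{h}{2}$.

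For the lower bound, write $\tilde P = -h^2\partial_x^2 + \mathcal V\pare{x}$ with $\mathcal V\pare{x} = V_0\pare{x}I + hV_1\pare{x}\pare{i\gamma^1\gamma^2} + h^2V_2\pare{x}\pare{i\gamma^1}$, where $V_0 = \tfrac{1}{l^2}+\tfrac{x^2}{l^4}+\tfrac{h^2m^2l^2}{x^2}$, $V_1 = -\tfrac{x}{l^4\sqrt{1/l^2+x^2/l^4}}$, $V_2 = \tfrac{ml}{x^2}$. Since $i\gamma^1$ and $i\gamma^1\gamma^2$ are Hermitian anti-commuting involutions, $\pare{hV_1\pare{i\gamma^1\gamma^2}+h^2V_2\pare{i\gamma^1}}^2 = \pare{h^2V_1^2+h^4V_2^2}I$, so $\lambda_{\min}\pare{\mathcal V\pare{x}} = V_0\pare{x}-\sqrt{h^2V_1^2+h^4V_2^2}$. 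Using $\sqrt{1/l^2+x^2/l^4}\geq 1/l$ and $\sqrt{a^2+b^2}\leq\abs{a}+\abs{b}$ yields $\sqrt{h^2V_1^2+h^4V_2^2}\leq \tfrac{h\abs{x}}{l^3}+\tfrac{h^2ml}{x^2}$, and completing the square in $\abs{x}$ gives
\[
\lambda_{\min}\pare{\mathcal V\pare{x}} \geq \tfrac{1}{l^2} + \pare{\tfrac{\abs{x}}{l^2}-\tfrac{h}{2l}}^2 - \tfrac{h^2}{4} + \tfrac{h^2ml\pare{ml-1}}{x^2} \geq \tfrac{1}{l^2}-\tfrac{h}{2}=E_0\pare{h},
\]
valid for $h\leq 2$ and $ml\geq 1$. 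Hence $\tilde P\geq E_0\pare{h}$ as quadratic forms, and $\tilde E\pare{h}\geq E_0\pare{h}$.

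For the upper bound I would test $\tilde P$ on $\psi_0\pare{x}=f_0\pare{x}u$, where $u\in\C^4$ is a unit $\pare{+1}$-eigenvector of $i\gamma^1$ and $f_0$ is the normalised ground state of the scalar operator $L = -h^2\partial_x^2 + \tfrac{x^2}{l^4} + \tfrac{h^2ml\pare{ml+1}}{x^2}$ on $\pare{-\infty,0}$. The key point is that $i\gamma^1\gamma^2 u$ lies in the $\pare{-1}$-eigenspace of $i\gamma^1$ (the two matrices anti-commute) and is therefore orthogonal to $u$, so the cross term in $\prods{\tilde P\psi_0}{\psi_0}$ vanishes; using $i\gamma^1 u=u$ and $m^2l^2+ml=ml\pare{ml+1}$ one finds $\prods{\tilde P\psi_0}{\psi_0} = \tfrac{1}{l^2}+\prods{Lf_0}{f_0}$. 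The rescaling $x=-\sqrt{hl^2}\,y$ conjugates $L$ to $\tfrac{h}{l^2}\pare{-\partial_y^2+y^2+\alpha_1\pare{\alpha_1-1}/y^2}$ on $\pare{0,+\infty}$ (since $\alpha_1\pare{\alpha_1-1}=ml\pare{ml+1}$ by definition of $\alpha_1$); this is a Calogero-type oscillator whose ground state $y^{\alpha_1}e^{-y^2/2}$ has eigenvalue $2\alpha_1+1$, so $\prods{Lf_0}{f_0} = \tfrac{\pare{2\alpha_1+1}h}{l^2}$. Since $\sqrt{\tfrac{1}{l^4}+\tfrac{h}{2l^6}}\geq\tfrac{1}{l^2}$, we get $\prods{\tilde P\psi_0}{\psi_0} = \tfrac{1}{l^2}+\tfrac{\pare{2\alpha_1+1}h}{l^2}\leq E_2\pare{h}\leq E_2\pare{h}+\tfrac{h}{2}$, and min--max yields $\tilde E\pare{h}\leq E_2\pare{h}+\tfrac{h}{2}$.

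The main obstacle is verifying that $\psi_0$ actually belongs to $D\pare{\tilde P}$, not only to the form domain. Since $ml>1$ forces $\alpha_1>2$, the behaviour $f_0\pare{x}\sim\pare{-x}^{\alpha_1}e^{-x^2/\pare{2hl^2}}$ near $0$ together with Gaussian decay at $-\infty$ ensure that $\tilde H\psi_0,\tilde P\psi_0\in L^2$ and that no boundary term appears in the integration by parts underlying the Rayleigh-quotient computation. Together with the compactness of the resolvent of $\tilde P$, this promotes the min--max infimum to a genuine eigenvalue and completes the proof.
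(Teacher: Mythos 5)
Your proof is correct and proves the same statement, but by a genuinely cleaner route than the paper's, on both sides of the inequality. For the lower bound, the paper estimates the off-diagonal potential term $-ih\gamma^1\gamma^2 A'$ by Cauchy--Schwarz (obtaining $\tfrac{h}{2}\tfrac{x^2}{l^6}+\tfrac{h}{2}$) and handles the inverse-square part by the pointwise bound $m^2l^2+i\gamma^1 ml\geq ml(ml-1)\geq 0$, whereas you exploit that $i\gamma^1\gamma^2$ and $i\gamma^1$ are anticommuting Hermitian involutions to write $\lambda_{\min}(\mathcal{V}(x))=V_0-\sqrt{h^2V_1^2+h^4V_2^2}$ exactly and then complete the square; both yield $\tilde P\geq E_0(h)$. (Small slip: after completing the square the subtracted constant is $\tfrac{h^2}{4l^2}$, not $\tfrac{h^2}{4}$; the conclusion $\geq\tfrac{1}{l^2}-\tfrac{h}{2}$ still holds for $h\leq 2l^2$, so this is harmless.) For the upper bound the difference is more substantive: the paper again applies Cauchy--Schwarz to the $\gamma^1\gamma^2$ term, which is why the comparison operator carries the perturbed harmonic coefficient $\tfrac{1}{l^4}+\tfrac{h}{2l^6}$, why the trial vector must mix the $ml(ml-1)$ and $ml(ml+1)$ Calogero ground states in the basis that diagonalizes $\gamma^1$, and why a padding term $(E_2-E_1)K\,\mathrm{diag}(1,0,0,1)K^{-1}$ and the final slack $+\tfrac{h}{2}$ appear. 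You instead choose the spinor factor $u$ in the $(+1)$-eigenspace of $i\gamma^1$, so that $i\gamma^1\gamma^2 u$ lands in the orthogonal $(-1)$-eigenspace and the off-diagonal contribution to the Rayleigh quotient vanishes identically; the resulting scalar problem is the unperturbed Calogero oscillator with coefficient $\tfrac{1}{l^4}$ and you get the sharper bound $\tilde E(h)\leq\tfrac{1}{l^2}+\tfrac{(2\alpha_1+1)h}{l^2}\leq E_2(h)$, making the extra $\tfrac{h}{2}$ in the statement superfluous. Your remarks on $\alpha_1>2$ (so that $f_0\sim(-x)^{\alpha_1}$ near $0$ gives $\psi_0\in D(\tilde P)$ with no boundary contribution) and on compact resolvent promoting the infimum to an eigenvalue address the right technical points. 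This is the same underlying mechanism as the paper (diagonalize the spin structure, compare with explicit Calogero ground states), but organized so that the algebraic cancellation does the work that the paper does with Cauchy--Schwarz.
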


\begin{proof}
Using the fact that $\prods{\partial_{x}\varphi\pare{x}}{\varphi\pare{x}}_{\C^{4}} \underset{x \to 0}{\to} 0$ and that:
\begin{equation*}
\frac{m^{2}l^{2}}{x^{2}} + i\gamma^{1} \frac{ml}{x^{2}} \geq \frac{ml\pare{ml-1}}{x^{2}} \geq 0,
\end{equation*}
since $ml>1$, we have:
\begin{equation*}
-h^{2}\partial_{x}^{2} + h^{2}\pare{ \frac{m^{2}l^{2}}{x^{2}} + i\gamma^{1} \frac{ml}{x^{2}}} \geq 0.
\end{equation*}
Moreover, we have:
\begin{flalign}
\abs{\prods{-ih \gamma^{1}\gamma^{2} \frac{x}{l^{4}} \pare{\frac{1}{l^{2}} + \frac{x^{2}}{l^{4}}}^{-\frac{1}{2}} \varphi}{\varphi}} & \leq  \frac{h}{2}\prods{\frac{x^{2}}{l^{8}} \pare{\frac{1}{l^{2}} + \frac{x^{2}}{l^{4}}}^{-1} \varphi}{\varphi} + \frac{h}{2} \norme{\varphi}^{2} \notag \\
&\leq \frac{h}{2} \prods{\frac{x^{2}}{l^{6}} \varphi}{\varphi} + \frac{h}{2} \norme{\varphi}^{2} \label{*}.
\end{flalign}
This implies:
\begin{equation*}
\frac{1}{l^{2}} + \frac{x^{2}}{l^{4}}-ih \gamma^{1}\gamma^{2} \frac{x}{l^{4}} \pare{\frac{1}{l^{2}} + \frac{x^{2}}{l^{4}}}^{-\frac{1}{2}} \geq \frac{1}{l^{2}} + \frac{2l^{2}x^{2} - hx^{2}}{2l^{6}}-\frac{h}{2}
\geq E_{0}\pare{h},
\end{equation*}
for $h>0$ sufficiently small. Thus:
\begin{equation*}
\prods{\tilde{P}\varphi}{\varphi} \geq E_{0}\pare{h} \norme{\varphi}_{L^{2}}^{2}.
\end{equation*}
Using \eqref{*}, we have:
\begin{equation*}
\prods{\tilde{P}\varphi}{\varphi} \leq \prods{\pare{-h^{2}\partial_{x}^{2} + \frac{1}{l^{2}} + \pare{\frac{1}{l^{4}} + \frac{h}{2l^{6}}}x^{2} + h^{2}\pare{\frac{m^{2}l^{2}}{x^{2}} + i\gamma^{1} \frac{ml}{x^{2}}}} \varphi}{\varphi} + \frac{h}{2} \norme{\varphi}^{2}.
\end{equation*}
We then diagonalize $\gamma^{1}$:
\begin{equation*}
\gamma^{1} = K D K^{-1}
\end{equation*}
where 
\begin{equation*}
K= \frac{1}{\sqrt{2}} \begin{pmatrix}
1 & 0 & 1 & 0 \\
0 & 1 & 0 & 1 \\
1 & 0 & -1 & 0 \\
0 & 1 & 0 & -1
\end{pmatrix}, \hspace{2mm}
D= \text{diag}\pare{i,-i,-i,i}, \hspace{2mm} K^{-1} = K.
\end{equation*}
We can then write:
\begin{flalign*}
\hphantom{A} & \prods{\pare{-h^{2}\partial_{x}^{2} + \frac{1}{l^{2}} + \pare{\frac{1}{l^{4}} + \frac{h}{2l^{6}}}x^{2} + h^{2}\pare{\frac{m^{2}l^{2}}{x^{2}} + i\gamma^{1} \frac{ml}{x^{2}}}} \varphi}{\varphi} \\
& = \left\langle \left (-h^{2}\partial_{x}^{2} + \frac{1}{l^{2}} + \pare{\frac{1}{l^{4}} + \frac{h}{2l^{6}}}x^{2} \right . \right .\\
& \quad \left . \left . + h^{2}\pare{\frac{m^{2}l^{2}}{x^{2}} + \begin{pmatrix}
-1 & 0 & 0 & 0 \\
0 & 1 & 0 & 0 \\
0 & 0 & 1 & 0 \\
0 & 0 & 0 & -1
\end{pmatrix} \frac{ml}{x^{2}}}\right ) K^{-1}\varphi ,K^{-1}\varphi \right \rangle.
\end{flalign*}
We will be interested in the following equations:
\begin{flalign}
\hphantom{A}& \pare{-h^{2}\partial_{x}^{2} + \frac{1}{l^{2}} + \pare{\frac{1}{l^{4}} + \frac{h}{2l^{6}}}x^{2} + h^{2}\frac{ml\pare{ml-1}}{x^{2}}}\psi_{1} = E_{1} \psi_{1}, \label{vp1} \\
& \pare{-h^{2}\partial_{x}^{2} + \frac{1}{l^{2}} + \pare{\frac{1}{l^{4}} + \frac{h}{2l^{6}}}x^{2} + h^{2}\frac{ml\pare{ml+1}}{x^{2}}}\psi_{2} = E_{2} \psi_{2} \label{vp2}.
\end{flalign}
We look first at the equation \eqref{vp1}. We choose:
\begin{equation*}
\psi_{1}\pare{x} = h^{-\frac{1}{4}}\pare{h^{-\frac{1}{2}}x}^{ml} e^{- \pare{\frac{1}{l^{4}} + \frac{h}{2l^{6}}}^{\frac{1}{2}} \frac{x^{2}}{2h}},
\end{equation*}
where the constants are chosen to normalize $\psi_{1}$. Then:
\begin{flalign*}
-h^{2} \partial_{x}^{2} \psi_{1}\pare{x} 
& = -h^{2}\left ( \frac{ml\pare{ml-1}}{x^{2}} \psi_{1}\pare{x} - \frac{ml}{h}\pare{\frac{1}{l^{4}} + \frac{h}{2l^{6}}}^{\frac{1}{2}} \psi_{1}\pare{x} \right . \\
& \quad \left . - \frac{ml+1}{h} \pare{\frac{1}{l^{4}} + \frac{h}{2l^{6}}}^{\frac{1}{2}} \psi_{1}\pare{x}  + \frac{1}{h^{2}}\pare{\frac{1}{l^{4}} + \frac{h}{2l^{6}}}x^{2}\psi_{1}\pare{x} \right ).
\end{flalign*}
Consequently:
\begin{equation*}
 \pare{-h^{2}\partial_{x}^{2} + \frac{1}{l^{2}} + \pare{\frac{1}{l^{4}} + \frac{h}{2l^{6}}}x^{2} + h^{2}\frac{ml\pare{ml-1}}{x^{2}}}\psi_{1} \pare{x}  = E_{1}\pare{h} \psi_{1}\pare{x}
\end{equation*}
with $E_{1}\pare{h} = \frac{1}{l^{2}} +  \pare{2ml+1}\pare{\frac{1}{l^{4}} + \frac{h}{2l^{6}}}^{\frac{1}{2}}h$.\\
We now study equation \eqref{vp2}. We first look for $\alpha$ such that $\alpha\pare{\alpha-1} = ml\pare{ml+1}$. We choose $\alpha_{1} = \frac{1+\sqrt{1+4ml\pare{ml+1}}}{2}$. We then consider:
\begin{equation*}
\psi_{2}\pare{x} = h^{-\frac{1}{4}} \pare{h^{-\frac{1}{2}}x}^{\alpha_{1}} e^{- \pare{\frac{1}{l^{4}}+ \frac{h}{2l^{6}}}^{\frac{1}{2}} \frac{x^{2}}{2h}},
\end{equation*}
where we have also chosen the constant to normalize $\psi_{2}$. In this case, we obtain:
\begin{equation*}
\pare{-h^{2}\partial_{x}^{2} + \frac{1}{l^{2}} + \pare{\frac{1}{l^{4}} + \frac{h}{2l^{6}}}x^{2} + h^{2}\frac{ml\pare{ml+1}}{x^{2}}}\psi_{2} = E_{2}\pare{h} \psi_{2},
\end{equation*}
where $E_{2}\pare{h} = \frac{1}{l^{2}} + \pare{2\alpha_{1}+1}\pare{\frac{1}{l^{4}} + \frac{h}{2l^{6}}}^{\frac{1}{2}} h$.\\
Let us remark that:
\begin{equation*}
\alpha_{1} = \frac{1+\sqrt{1+4ml\pare{ml+1}}}{2} \geq \frac{1+2ml}{2} > ml,
\end{equation*}
so that $E_{2}\pare{h} > E_{1}\pare{h}$. We then see that:
\begin{flalign*}
\hphantom{A} & \prods{\pare{-h^{2}\partial_{x}^{2} + \frac{1}{l^{2}} + \pare{\frac{1}{l^{4}} + \frac{h}{2l^{6}}}x^{2} + h^{2}\pare{\frac{m^{2}l^{2}}{x^{2}} + \begin{pmatrix}
-1 & 0 & 0 & 0 \\
0 & 1 & 0 & 0 \\
0 & 0 & 1 & 0 \\
0 & 0 & 0 & -1
\end{pmatrix} \frac{ml}{x^{2}}}} K^{-1}\varphi}{K^{-1}\varphi} \\
& \leq \prods{\pare{-h^{2}\partial_{x}^{2} + \frac{1}{l^{2}} + \pare{\frac{1}{l^{4}} + \frac{h}{2l^{6}}}x^{2} + h^{2}\pare{\frac{m^{2}l^{2}}{x^{2}} + \begin{pmatrix}
-1 & 0 & 0 & 0 \\
0 & 1 & 0 & 0 \\
0 & 0 & 1 & 0 \\
0 & 0 & 0 & -1
\end{pmatrix} \frac{ml}{x^{2}}}} K^{-1}\varphi}{K^{-1}\varphi} \\
& \quad + \prods{ \pare{E_{2}- E_{1}}\pare{h} \begin{pmatrix}
1 & 0 & 0 & 0 \\
0 & 0 & 0 & 0 \\
0 & 0 & 0 & 0 \\
0 & 0 & 0 & 1
\end{pmatrix}K^{-1}\varphi}{K^{-1}\varphi}.
\end{flalign*}
With $K^{-1}\varphi\pare{x} = \begin{pmatrix}
\psi_{1}\pare{x} \\
\psi_{2}\pare{x} \\
\psi_{2}\pare{x} \\
\psi_{1}\pare{x}
\end{pmatrix}$, we obtain:
\begin{equation*}
\prods{\tilde{P}\varphi}{\varphi} \leq \pare{E_{2}\pare{h}+ \frac{h}{2}} \norme{\varphi}^{2}.
\end{equation*}
The spectral theorem allows us to conclude the proof.
\end{proof}

\subsection{Agmon Estimates}
In this part, we will study the operator $P^{+}= P_{[x_{+},0[}$. We will prove Agmon estimates for this operator. We begin by a lemma: 
\begin{lem}
Let $\phi \in C^{\infty}\pare{[x_{+},0[}$ and $f\in D\pare{P^{+}}$ where the domain is given by \eqref{DomPComp}. Then $e^{-\frac{\phi}{h}} f \in D\pare{P^{+}}$. Moreover, for all $E$, we have:
\begin{flalign*}
\hphantom{A} & \Re \pare{\prods{e^{\frac{\phi}{h}} \pare{-h^{2}\partial_{x}^{2} + V -E} e^{-\frac{\phi}{h}} f}{f}} \\
& = \prods{\pare{-h^{2}\partial_{x}^{2} + A^{2} + h^{2}\pare{ m^{2} B^{2} + m\gamma^{1} B'}- \pare{\phi'}^{2} - E} f}{f} \\
& \quad - 2h \Im\pare{\prods{\begin{pmatrix}
0 & 1 & 0 & 0 \\
0 & 0 & 0 & 0 \\
0 & 0 & 0 & 1 \\
0 & 0 & 0 & 0 
\end{pmatrix} A' f}{f}}.
\end{flalign*}
where $V$ is defined in \eqref{ExprV}.
\end{lem}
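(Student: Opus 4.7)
The plan is to establish the identity by a direct algebraic conjugation followed by integration by parts. First I would verify that $e^{-\phi/h} f \in D\pare{P^+}$: since $\phi \in C^\infty\pare{[x_+,0[}$ is smooth up to $0$, multiplication by $e^{\pm\phi/h}$ is bounded with bounded derivatives on the closure, preserves the Dirichlet condition $f\pare{x_+}=0$, and preserves the $L^2$ regularity of $Hf$ and $Pf$ (the commutators $[H,e^{-\phi/h}]$ and $[P,e^{-\phi/h}]$ only generate lower order terms involving $\phi'$ and $\phi''$). The core algebraic computation is then the standard conjugation identity
\begin{equation*}
e^{\phi/h}\pare{-h^2\partial_x^2}e^{-\phi/h} = -h^2\pare{\partial_x - \phi'/h}^2 = -h^2\partial_x^2 + 2h\phi'\partial_x + h\phi'' - \pare{\phi'}^2,
\end{equation*}
while $V$, being a multiplication operator, satisfies $e^{\phi/h} V e^{-\phi/h} = V$.

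Pairing with $f$ and taking the real part, the scalar contributions $A^2$, $h^2 m^2 B^2$, $-\pare{\phi'}^2$, $-E$ and $\prods{-h^2\partial_x^2 f}{f}$ are all real. The key cancellation comes from the cross term: integrating by parts gives
\begin{equation*}
\Re\prods{2h\phi'\partial_x f}{f} = h\int_{x_+}^{0} \phi'\pare{x}\,\partial_x\norme{f\pare{x}}_{\C^4}^2\,dx = -h\int_{x_+}^{0}\phi''\norme{f}_{\C^4}^2\,dx + \text{boundary terms},
\end{equation*}
and the resulting bulk term cancels $\prods{h\phi''f}{f}$ exactly. The boundary contribution at $x_+$ vanishes by the Dirichlet condition; the one at $0$ vanishes thanks to the decay estimate $\norme{f\pare{x}}_{\C^4} = O\pare{\pare{-x}^{\min\pare{3/2,ml}}}$ of the preceding Proposition, together with the boundedness of $\phi'$.

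It then remains to rewrite the anti-Hermitian pieces of $V$ as the announced $\Im$-terms. Writing $\gamma^1\gamma^2 = N - N^T$, where $N$ is the upper-triangular matrix appearing in the statement, and using that for any $u \in \C^4$
\begin{equation*}
\prods{\pare{N-N^T}u}{u}_{\C^4} = \prods{Nu}{u}_{\C^4} - \overline{\prods{Nu}{u}_{\C^4}} = 2i\,\Im\prods{Nu}{u}_{\C^4},
\end{equation*}
multiplying by $-ih A'$ and integrating in $x$ turns $\prods{-ih\gamma^1\gamma^2 A'f}{f}$ into a real multiple of $h\,\Im\prods{NA'f}{f}$, which is the last term on the right-hand side (up to sign conventions). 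The $ih^2 m\gamma^1 B'$ part of $V$ is Hermitian and its contribution remains, giving the $h^2 m\gamma^1 B'$ piece in the bulk term.

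The main technical obstacle is the careful justification of the integrations by parts near $x=0$: elements of $D\pare{P^+}$ are not smooth up to the Anti-de Sitter boundary and the potential $B$ is singular there, so one must lean explicitly on the pointwise decay of $\norme{\varphi\pare{x}}_{\C^4}$ and $\norme{\partial_x\varphi\pare{x}}_{\C^4}$ given in the preceding Proposition, and on the Remark $\prods{\partial_x\varphi}{\varphi}_{\C^4}\to 0$ as $x\to 0$. Once every boundary trace has been shown to vanish, the identity follows by collecting the terms produced in the conjugation.
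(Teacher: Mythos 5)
Your proposal is correct and follows essentially the same route as the paper's own proof: conjugate $-h^{2}\partial_{x}^{2}$ by $e^{\pm\phi/h}$, integrate by parts with the boundary terms killed by the decay estimates of the preceding Proposition and the Dirichlet condition at $x_{+}$, and rewrite the $-ih\gamma^{1}\gamma^{2}A'$ piece of $V$ as an imaginary part by splitting $\gamma^{1}\gamma^{2}$ into its strictly upper- and lower-triangular parts. One small sign to fix: from the representation \eqref{MatDir} one finds $\gamma^{1}\gamma^{2}=N^{T}-N$ rather than $N-N^{T}$ (the list of products displayed after the definition of $P$ in Subsection 3.1 carries a sign error, whereas the matrix written in the Lemma's proof is the correct one), and it is precisely this orientation that produces the minus sign in front of $2h\,\Im\prods{NA'f}{f}$ in the statement.
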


\begin{proof}
We have:
\begin{equation*}
V\pare{x} = A^{2}\pare{x} + h^{2}m^{2}B^{2}\pare{x} + h^{2}m \gamma^{1}B'\pare{x} - ih \gamma^{1}\gamma^{2} A'\pare{x}
\end{equation*}
where $A,B,A'$ and $B'$ are real function and 
\begin{equation*}
\gamma^{1}\gamma^{2} =\begin{pmatrix}
0 & -1 & 0 & 0 \\
1 & 0 & 0 & 0 \\
0 & 0 & 0 & -1 \\
0 & 0 & 1 & 0
\end{pmatrix}.
\end{equation*}
Write $f = \begin{pmatrix}
f_{1} \\
f_{2} \\
f_{3} \\
f_{4} 
\end{pmatrix}$. Then:
\begin{equation*}
\prods{ - ih \gamma^{1}\gamma^{2} A'\pare{x} f }{f}  = -2h \pare{\Im\pare{\prods{A'f_{2}}{f_{1}}} + \Im \pare{\prods{A'f_{4}}{f_{3}}}}.
\end{equation*}
This gives the last term of our formula. For the other term, we have:
\begin{equation*}
\prods{-\partial_{x}^{2} \pare{e^{-\frac{\phi}{h}}f}}{e^{\frac{\phi}{h}}f} = \prods{\pare{-\partial_{x}^{2} -\frac{\pare{\phi'}^{2}}{h^{2}}}f}{f} + 2i \Im\pare{\prods{\frac{\phi'}{h} \partial_{x}f}{f}},
\end{equation*}
by integration by part and some calculation similar to the ones in \cite{Gan14}. Taking the real part of this expression gives the result.
\end{proof}
We then have:
\begin{lem} \label{lemexprpsP+}
Let $\varphi \in D\pare{P^{+}}$ and $\chi \in C^{\infty}_{0}\pare{[x_{+},0[}$. Then:
\begin{flalign*}
\Re \prods{e^{\frac{\phi}{h}}\pare{P^{+}-E} \chi \varphi}{e^{\frac{\phi}{h}} \chi \varphi} & = \Re \pare{\prods{e^{\frac{\phi}{h}} \chi \pare{P^{+}-E} \varphi}{e^{\frac{\phi}{h}}\chi \varphi}} \\
& \quad + h^{2} \prods{\varphi}{e^{2\frac{\phi}{h}} \pare{2\frac{\phi'}{h} \chi \chi' + \pare{\chi'}^{2}} \varphi}
\end{flalign*}
\end{lem}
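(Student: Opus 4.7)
The plan is to treat this as an IMS-type localization identity. Since $P^{+} = -h^{2}\partial_{x}^{2} + V$ with $V$ a matrix-valued multiplication operator, $V$ commutes with the scalar multiplications by $\chi$ and $e^{\phi/h}$, so the only nontrivial commutator is that of $-h^{2}\partial_{x}^{2}$ with $\chi$. Explicitly, $\left[P^{+}, \chi\right] = -h^{2}\pare{\chi'' + 2\chi' \partial_{x}}$, which yields
\[
\pare{P^{+} - E}\chi\varphi = \chi\pare{P^{+}-E}\varphi - h^{2}\chi''\varphi - 2h^{2}\chi'\partial_{x}\varphi.
\]

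I would then take the inner product of both sides with $e^{\phi/h}\chi\varphi$. Since $e^{\phi/h}$ and $\chi$ are real scalar functions, the factors regroup to produce a weight $e^{2\phi/h}\chi$, so that the difference
\[
\prods{e^{\frac{\phi}{h}}\pare{P^{+}-E}\chi\varphi}{e^{\frac{\phi}{h}}\chi\varphi} - \prods{e^{\frac{\phi}{h}}\chi\pare{P^{+}-E}\varphi}{e^{\frac{\phi}{h}}\chi\varphi}
\]
equals $-h^{2}\prods{e^{2\frac{\phi}{h}}\chi\chi''\varphi}{\varphi} - 2h^{2}\prods{e^{2\frac{\phi}{h}}\chi\chi'\partial_{x}\varphi}{\varphi}$.

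Taking real parts, the critical step is handling the derivative on $\varphi$ using the identity $2\Re \pare{\alpha \partial_{x}\varphi \cdot \bar\varphi} = \alpha \partial_{x} \norme{\varphi}_{\C^{4}}^{2}$ valid for real scalar $\alpha$, followed by an integration by parts:
\[
2\Re \prods{e^{2\frac{\phi}{h}}\chi\chi'\partial_{x}\varphi}{\varphi} = \left[e^{2\frac{\phi}{h}}\chi\chi' \norme{\varphi}_{\C^{4}}^{2}\right]_{x_{+}}^{0} - \int \pare{e^{2\frac{\phi}{h}}\chi\chi'}' \norme{\varphi}_{\C^{4}}^{2} dx.
\]
Expanding $\pare{e^{2\phi/h}\chi\chi'}' = 2\frac{\phi'}{h}e^{2\phi/h}\chi\chi' + e^{2\phi/h}\pare{(\chi')^{2} + \chi\chi''}$, the $\chi\chi''$ contribution precisely cancels the term $-h^{2}\prods{e^{2\phi/h}\chi\chi''\varphi}{\varphi}$, and assembling the remaining pieces gives the announced formula.

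The main obstacle is justifying the integration by parts and the vanishing of both boundary terms. At $0$ this is immediate because $\chi \in C_{0}^{\infty}\pare{[x_{+},0[}$ is compactly supported away from $0$. At the left endpoint $x_{+}$, it follows from the Dirichlet-type condition $\varphi(x_{+}) = 0$ built into the definition of $D\pare{P^{+}}$ in \eqref{DomPComp}. One should also note that $\chi\varphi \in D\pare{P^{+}}$, which is automatic since $\chi$ is smooth and compactly supported in $[x_{+}, 0[$ and $\varphi$ already satisfies the regularity and boundary condition.
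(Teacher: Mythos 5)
Your proof is correct and follows essentially the same route as the paper: reduce to the commutator $[-h^{2}\partial_{x}^{2},\chi]$, then integrate by parts and observe that the boundary terms vanish (at $0$ by compact support of $\chi$, at $x_{+}$ by the Dirichlet condition in $D\pare{P^{+}}$). The only bookkeeping difference is that you pass to $2\Re\pare{\alpha\,\partial_{x}\varphi\cdot\bar\varphi}=\alpha\,\partial_{x}\norme{\varphi}^{2}$ and integrate by parts once, whereas the paper splits $-h^{2}\chi''\varphi-2h^{2}\chi'\varphi'$ as $-h^{2}\partial_{x}\pare{\chi'\varphi}-h^{2}\chi'\varphi'$, integrates by parts on the first piece, and separately observes that the remaining cross terms are of the form $A-\bar A$ and therefore drop out after taking real parts — the two computations are equivalent.
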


\begin{proof}
First, we have:
\begin{equation*}
\prods{e^{\frac{\phi}{h}}\pare{P^{+}-E} \chi \varphi}{e^{\frac{\phi}{h}} \chi \varphi} = \prods{e^{\frac{\phi}{h}} \chi \pare{P^{+}-E} \varphi}{e^{\frac{\phi}{h}}\chi \varphi} + \prods{\left [ -h^{2}\partial_{x}^{2}, \chi \right ] \varphi}{e^{2\frac{\phi}{h}} \chi \varphi}.
\end{equation*}
The real part of the last term is:
\begin{flalign*}
\Re\pare{\prods{\left [ -h^{2}\partial_{x}^{2}, \chi \right ] \varphi}{e^{2\frac{\phi}{h}} \chi \varphi}} & = \Re \pare{ \prods{-h^{2}\partial_{x}\pare{\chi'\varphi}}{e^{2\frac{\phi}{h}} \chi \varphi}} + \Re\pare{-h^{2} \prods{e^{\frac{\phi}{h}} \chi' \varphi'}{e^{\frac{\phi}{h}} \chi \varphi}} \\
& = \Re\pare{h^{2} \prods{\varphi}{e^{2\frac{\phi}{h}} \pare{2\frac{\phi'}{h} \chi \chi' + \pare{\chi'}^{2}} \varphi}} \\
& \quad + \Re\pare{h^{2} \pare{\prods{e^{\frac{\phi}{h}} \chi \varphi}{e^{\frac{\phi}{h}} \chi' \varphi'}- \overline{\prods{e^{\frac{\phi}{h}} \chi \varphi}{e^{\frac{\phi}{h}} \chi' \varphi'}}}} \\
& = h^{2} \prods{\varphi}{e^{2\frac{\phi}{h}} \pare{2\frac{\phi'}{h} \chi \chi' + \pare{\chi'}^{2}} \varphi},
\end{flalign*}
since the terms are real. This gives the result.
\end{proof}
 Let us introduce $x_{A}\pare{E}$ such that $A^{2}\pare{x_{A}\pare{E}} = E$. Recall that, in a neighborhood of $0$, we have:
\begin{equation*}
A^{2}\pare{x} = \frac{1}{l^{2}} + \frac{x^{2}}{l^{4}} + o\pare{x^{2}},
\end{equation*}
so that:
\begin{equation*}
x_{A}\pare{\frac{1}{l^{2}} + Th} = - l^{2} T^{\frac{1}{2}} h^{\frac{1}{2}} + o \pare{h^{\frac{1}{2}}}.
\end{equation*}
We then have:
\begin{lem}
Let $T>0$ and $\delta > 0$. Then, there exist $k>0$, $h_{0}>0$ such that, for all $\varphi \in D\pare{P^{+}}$, we have:
\begin{equation*}
\prods{\pare{A^{2}\pare{x} - \pare{\frac{1}{l^{2}} + Th} - kx^{2}} \varphi}{\varphi} - 2h \Im\pare{\prods{\begin{pmatrix}
0 & 1 & 0 & 0 \\
0 & 0 & 0 & 0 \\
0 & 0 & 0 & 1 \\
0 & 0 & 0 & 0 
\end{pmatrix} A' \varphi}{\varphi}} > \delta h
\end{equation*}
for all $x \in [x_{+},x_{A}\pare{\frac{1}{l^{2}} + Th}[$ and $h \in (0,h_{0})$.
\end{lem}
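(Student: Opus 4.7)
I read the statement as a pointwise (in $x$) matrix inequality on $\C^{4}$: for each $x$ in the indicated range,
\[
\pare{A^{2}(x) - \tfrac{1}{l^{2}} - Th - kx^{2}}\,\mathrm{Id}_{\C^{4}} - 2h\, A'(x)\,\tfrac{N-N^{*}}{2i} \ \geq\ \delta h\,\mathrm{Id}_{\C^{4}},
\]
where $N$ denotes the off-diagonal matrix of the statement. The first step I would take is to absorb the off-diagonal contribution: the anti-Hermitian part $(N-N^{*})/(2i)$ is self-adjoint with spectrum $\{-\tfrac{1}{2},\tfrac{1}{2}\}$, so the matrix inequality above is implied by the scalar pointwise inequality
\[
A^{2}(x) - \tfrac{1}{l^{2}} - Th - kx^{2} - h|A'(x)| \ >\ \delta h.
\]

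Next I would split the interval $[x_{+}, x_{A}(\tfrac{1}{l^{2}}+Th))$ at a small fixed $\epsilon > 0$, independent of $h$, and treat the two pieces separately. On the \emph{bulk} $[x_{+},-\epsilon]$, monotonicity of $A^{2}$ on $[x_{+},0]$ (which holds because $x_{+}$ was chosen to the right of the photon sphere) gives a uniform lower bound $A^{2}(x) - \tfrac{1}{l^{2}} \geq c_{\epsilon} > 0$ and $|A'|$ is bounded; choosing $k$ so that $kx_{+}^{2} < c_{\epsilon}/4$ and then taking $h$ small enough settles this regime. On the \emph{near-turning-point} regime $[-\epsilon, x_{A}(\tfrac{1}{l^{2}}+Th))$ I would use the Taylor expansions $A^{2}(x) - \tfrac{1}{l^{2}} = \tfrac{x^{2}}{l^{4}} + o(x^{2})$ and $A'(x) = \tfrac{x}{l^{3}} + o(x)$ recalled at the start of the section. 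With $k < 1/l^{4}$ fixed, the quantity to bound becomes $\pare{\tfrac{1}{l^{4}} - k}\,x^{2} - Th - h|x|/l^{3} + o(x^{2}) + o(h|x|)$. The defining inequality $A^{2}(x) \geq \tfrac{1}{l^{2}} + Th$ combined with the expansion forces $x^{2} \geq l^{4} Th\,(1 + o(1))$, so the positive quadratic term is at least of size $(1 - kl^{4})\,Th\,(1+o(1))$, while the linear-in-$|x|$ correction is of size $O(h^{3/2})$ and is negligible against the $\delta h$ target.

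\textbf{Main obstacle.} The delicate point is the behavior very close to the open endpoint $x_{A}(\tfrac{1}{l^{2}}+Th)$: there $A^{2}(x) - \tfrac{1}{l^{2}} - Th$ vanishes while the subtraction $kx^{2} \sim k l^{4}\,Th$ is itself of order $h$, so only the open-interval restriction (together with the precise sub-leading term in the expansion of $A^{2}$ near $0$) can produce the required strict margin $\delta h$. The whole argument will hinge on choosing $k$ small relative to $T$ and $\delta$, and on controlling the Taylor remainders $o(x^{2})$ uniformly in a bounded neighborhood of $0$, so that the leading positive quadratic gain dominates throughout the open interval.
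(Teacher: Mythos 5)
Your reduction to the scalar pointwise inequality is the same as the paper's (it bounds $\abs{2\Im\prods{N A'\varphi}{\varphi}}$ by $\abs{A'}\norme{\varphi}^2$ pointwise, and uses $A'<0$ near $0$, increasing $x_+$ if necessary, to replace $-h\abs{A'}$ by $+hA'$), and your bulk treatment on $[x_+,-\epsilon]$ is sound. But the issue you flag as the "main obstacle" is not a delicacy to be finessed; it is a genuine obstruction that your plan cannot close. As $x\to x_A\pare{\frac{1}{l^2}+Th}$ from the left, $A^2(x)-\frac{1}{l^2}-Th\to 0^+$ while $kx^2\to k\,l^4\,Th\,(1+o(1))>0$ and $h\abs{A'(x)}=O(h^{3/2})$, so the scalar quantity tends to $-k\,l^4\,Th\,(1+o(1))<0$. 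Excluding one endpoint does not restore the margin: the quantity stays below $\delta h$ on a sub-interval of length comparable to $\sqrt h$ for every $k>0$ and $h_0>0$, and the $o(x^2)=o(h)$ Taylor remainders are far too small to reverse an $O(h)$ deficit. Indeed the bound you derive, $(1-kl^4)Th(1+o(1))-Th = -kl^4Th(1+o(1))$ up to lower order, is negative.

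The resolution is that the estimate actually holds, and is proved and used in the paper, on the strictly smaller interval $[x_+,\,x_A\pare{\frac{1}{l^2}+(T+2\delta)h}]$; the interval written in the statement appears to be a slip, and the cutoff $\chi$ in the proposition that invokes this lemma is supported precisely in the smaller one. At that right endpoint the first two terms contribute a surplus $2\delta h$, and with $k=\frac{\delta}{4l^4(T+2\delta)}$ the subtraction $kx^2\approx\frac{\delta}{4}h$ still leaves $>\delta h$ after absorbing the $O(h^{3/2})$ term. To propagate leftward the paper does not use your crude bound $x^2\geq l^4 Th$ at all; it sets $M(x,h)=A^2(x)+hA'(x)-\pare{\frac{1}{l^2}+Th}-kx^2$, checks $M'<0$ at that endpoint and $M''\geq 0$ near $0$, and concludes by monotonicity that $M\geq M\pare{x_A\pare{\frac{1}{l^2}+(T+2\delta)h},h}>\delta h$ on the whole near-$0$ region, with the far region handled by a uniform positive constant as in your bulk step. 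If you restrict the claimed interval accordingly and replace the near-turning-point step by this endpoint-plus-monotonicity argument, your outline becomes correct.
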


\begin{proof}
First, we have:
\begin{equation*}
\abs{2 \Im\pare{\prods{\begin{pmatrix}
0 & 1 & 0 & 0 \\
0 & 0 & 0 & 0 \\
0 & 0 & 0 & 1 \\
0 & 0 & 0 & 0 
\end{pmatrix} A' \varphi}{\varphi}}} \leq \prods{\abs{A'\pare{x}} \varphi}{\varphi}.
\end{equation*}
Thus:
\begin{flalign*}
\hphantom{A} & \prods{\pare{A^{2}\pare{x} - \pare{\frac{1}{l^{2}} + Th} - kx^{2}} \varphi}{\varphi} - 2h \Im\pare{\prods{\begin{pmatrix}
0 & 1 & 0 & 0 \\
0 & 0 & 0 & 0 \\
0 & 0 & 0 & 1 \\
0 & 0 & 0 & 0 
\end{pmatrix} A' \varphi}{\varphi}}\\
&  \geq \prods{\pare{A^{2}\pare{x} - h\abs{A'\pare{x}} - \pare{\frac{1}{l^{2}} + Th} - kx^{2}} \varphi}{\varphi}.
\end{flalign*}
We use the same technique as in \cite{Gan14} to show that the last term is bounded below by $\delta h$. We introduce the function:
\begin{equation*}
M\pare{x,h} = A^{2}\pare{x} +h A'\pare{x} - \pare{\frac{1}{l^{2}} + Th} - kx^{2}
\end{equation*}
since $A'\pare{x} <0$ on $[x_{+},0[$ (by increasing $x_{+}$ if needed). Choose $ k = \frac{\delta}{4 l^{4} \pare{T+ 2\delta}}$, then:
\begin{equation*}
M\pare{x_{A}\pare{\frac{1}{l^{2}} + \pare{T+2\delta}h},h} 
 = 2\delta h - \frac{\delta}{4} h + o\pare{h} > \delta h,
\end{equation*}
for $h$ sufficiently small. Note that $k \leq \frac{1}{8l^{4}}$. We have:
\begin{equation*}
M'\pare{x,h} = \pare{A^{2}}'\pare{x} + h A''\pare{x} - 2kx,
\end{equation*}
where $ \pare{A^{2}}'\pare{x} = \frac{2x}{l^{4}} + o\pare{x}$ and $A''\pare{x} = \frac{1}{l^{3}} + o\pare{1}$. We obtain:
\begin{flalign*}
M'\pare{x_{A}\pare{\frac{1}{l^{2}} + \pare{T+2\delta}h},h} & \leq \frac{6}{4l^{4}} x_{A}\pare{\frac{1}{l^{2}} + \pare{T+2\delta}h} + o\pare{h^{\frac{1}{2}}} \\
& < -\frac{5}{4l^{4}} l^{2}T^{\frac{1}{2}} h^{\frac{1}{2}} <0.
\end{flalign*}
Moreover:
\begin{equation*}
M''\pare{x,h} = \pare{A^{2}}''\pare{x} + h A^{(3)}\pare{x} - 2k,
\end{equation*}
where $\pare{A^{2}}''\pare{x} = \frac{2}{l^{4}} + o\pare{1}$, $A^{(3)}\pare{x} = O\pare{1}$ and $2k \leq \frac{1}{4l^{4}}$. Thus:
\begin{equation*}
M''\pare{x,h} \geq \frac{7}{4l^{4}} + o\pare{1} + h O\pare{1}.
\end{equation*}
Since $\frac{7}{4l^{4}} + o\pare{1} >0$ on $[-2A,0]$, we have:
\begin{equation*}
M''\pare{x,h} \geq 0,
\end{equation*}
for $h$ sufficiently small, on $[-A,0]$. Hence, on $[-A,x_{A}\pare{\frac{1}{l^{2}} + \pare{T+2\delta}h}]$, $M'<0$ and:
\begin{equation*}
M\pare{x,h} \geq \delta h.
\end{equation*}
On $[x_{+},-A]$, we have:
\begin{equation*}
A^{2}\pare{x} - \pare{\frac{1}{l^{2}} + Th} - k x^{2} \geq A^{2}\pare{x} - \pare{\frac{1}{l^{2}} + Th} - k x_{+}^{2} > C_{1}
\end{equation*}
where we can decrease $k$ if needed. For $h$ sufficiently small, we have:
\begin{equation*}
A^{2}\pare{x} +h A'\pare{x} - \pare{\frac{1}{l^{2}} + Th} - k x^{2}> C_{1} - h C_{A},
\end{equation*}
where $C_{A} = \underset{[x_{+},-A]}{\sup} A'\pare{x}$. Thus:
\begin{equation*}
M\pare{x,h} \geq C_{2}
\end{equation*}
on $[x_{+},-A]$. The constant $C_{2}$ have to satisfy:
\begin{equation*}
C_{2} \geq \delta h,
\end{equation*}
that is $C_{1} > \pare{\delta + C_{A}}h$. For a good choice of $k$ and $h_{0}$, we obtain the result.
\end{proof}
We can now prove the following estimate:
\begin{prop}
Let $T>0$. There exist constants $h_{0} >0$, $C>0$, and $c >0$ such that:
\begin{equation*}
\norme{e^\frac{x^{2}}{ch} \varphi}_{L^{2}\pare{[x_{+},0[}} \leq C\pare{\norme{\varphi}_{L^{2}\pare{[x_{+},0[}} + h^{-1} \norme{e^{\frac{x^{2}}{ch}} \pare{P^{+} - E\pare{h}}\varphi}_{L^{2}\pare{[x_{+},0[}}}
\end{equation*}
for all $h\in (0, h_{0})$, $\varphi \in D\pare{P^{+}}$ and $E\pare{h}$ such that $E\pare{h} < 1+Th$.
\end{prop}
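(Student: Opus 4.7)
The strategy is the standard Agmon argument adapted to the matrix-valued Dirac setting. I choose the weight $\phi(x) = \frac{x^{2}}{c}$, so that $(\phi')^{2} = \frac{4x^{2}}{c^{2}}$, with $c$ large enough that $\frac{4}{c^{2}} \le k$, where $k$ is the constant furnished by the preceding lemma applied with parameters $T$ and some fixed $\delta>0$. Setting $x_{T} := x_{A}\pare{\frac{1}{l^{2}}+Th}$, one has $x_{T} \sim -l^{2}T^{1/2}h^{1/2}$, and the subinterval $[x_{T},0[$ has length $O(h^{1/2})$ on which $e^{\phi/h}$ stays bounded by a constant; there the desired inequality is immediate via $\norme{e^{\phi/h}\varphi}_{L^{2}([x_{T},0[)} \le C\norme{\varphi}_{L^{2}}$. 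All the work therefore concerns the classically forbidden region $[x_{+},x_{T}]$.

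To treat this region I introduce a cutoff $\chi \in C^{\infty}_{0}([x_{+},0[)$ with $\chi \equiv 1$ on $[x_{+},-C_{1}h^{1/2}]$ and supported in $[x_{+},-C_{2}h^{1/2}]$ for constants $C_{1}>C_{2}>l^{2}T^{1/2}$, chosen so that $\supp\chi$ lies strictly inside the region where the preceding pointwise lower bound is valid. Setting $f = e^{\phi/h}\chi\varphi \in D\pare{P^{+}}$ and combining Lemma \emph{lemexprpsP+} with the conjugation identity from the first lemma of this subsection, I obtain an identity of the form
\begin{equation*}
\prods{\pare{-h^{2}\partial_{x}^{2} + h^{2}m^{2}B^{2} + ih^{2}m\gamma^{1}B'}f}{f} + \prods{\pare{A^{2}-(\phi')^{2}-E}f}{f} - 2h\Im\pare{\ldots} = \Re\prods{e^{\phi/h}\chi(P^{+}-E)\varphi}{f} + h^{2}\prods{\varphi}{e^{2\phi/h}\pare{2\tfrac{\phi'}{h}\chi\chi'+(\chi')^{2}}\varphi}.
\end{equation*}
The first term on the left is nonnegative by the confining lemma applied to $f \in D\pare{P^{+}}$. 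Since $(\phi')^{2} \le kx^{2}$ and $E \le \frac{1}{l^{2}}+Th$, the preceding lemma upgrades the remaining scalar-plus-imaginary expression on the left to $\ge \delta h \norme{f}^{2}$ on $\supp\chi$.

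To close the estimate I absorb the right-hand side. A Cauchy--Schwarz splitting with weight $\varepsilon = \delta h/2$ dominates $\Re\prods{e^{\phi/h}\chi(P^{+}-E)\varphi}{f}$ by $\frac{1}{\delta h}\norme{e^{\phi/h}(P^{+}-E)\varphi}^{2} + \frac{\delta h}{4}\norme{f}^{2}$. The commutator term is supported in the thin strip $[-C_{1}h^{1/2},-C_{2}h^{1/2}]$, on which $e^{2\phi/h}$ is bounded, $\abs{\phi'} = O(h^{1/2})$ and $\abs{\chi'} = O(h^{-1/2})$, so a direct pointwise estimate bounds it by $Ch\norme{\varphi}^{2}$. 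Taking square roots yields $\norme{e^{\phi/h}\chi\varphi} \le \frac{C}{h}\norme{e^{\phi/h}(P^{+}-E)\varphi} + C\norme{\varphi}$, and combined with the trivial bound on $[-C_{1}h^{1/2},0[$ this gives the claim.

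The main obstacle is the matrix-valued imaginary remainder $-2h\,\Im\pare{\ldots}$ produced by the $-ih\gamma^{1}\gamma^{2}A'$ term: unlike the scalar Agmon setting, it cannot be thrown away, and one needs the stronger \emph{strict} lower bound $>\delta h$ of the preceding lemma (rather than mere positivity of $A^{2}-E-kx^{2}$) to control it. The secondary technical point is ensuring that $f = e^{\phi/h}\chi\varphi$ lies in $D\pare{P^{+}}$ so that the confining lemma is applicable; this uses the compact support of $\chi$ inside $[x_{+},0[$.
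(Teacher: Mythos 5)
Your proof is correct and takes essentially the same route as the paper: the weight $\phi = x^{2}/c$ with $c$ calibrated to the constant $k$ of the preceding lemma, an $h^{1/2}$-scaled cutoff supported in the classically forbidden region, the combination of the conjugation lemma with Lemma \ref{lemexprpsP+}, absorption of the cross term, a pointwise bound on the commutator remainder over the $O(h^{1/2})$-thin transition strip, and the trivial bound on the boundary region where $e^{\phi/h}$ stays $O(1)$. The only cosmetic difference is that you pass through an explicit Cauchy--Schwarz splitting rather than invoking the quadratic-inequality argument of \cite{Gan14} Proposition 3.12, but the two are equivalent; one small remark is that your closing comment slightly overstates the role of the strict bound $>\delta h$ --- the $-2h\Im(\cdots)$ term is already built into the preceding lemma's conclusion, so it is absorbed automatically rather than requiring a separate mechanism.
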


\begin{proof}
Fix $\delta >0$. By the preceding lemma, we can choose a constant $c>0$ such that, if $\phi\pare{x} = \frac{x^{2}}{c}$, we have:
\begin{equation*}
\prods{\pare{A^{2}\pare{x} - \pare{\frac{1}{l^{2}} + Th} - \pare{\phi'}^{2}} \varphi}{\varphi} - 2h \Im\pare{\prods{\begin{pmatrix}
0 & 1 & 0 & 0 \\
0 & 0 & 0 & 0 \\
0 & 0 & 0 & 1 \\
0 & 0 & 0 & 0 
\end{pmatrix} A' \varphi}{\varphi}} > \delta h
\end{equation*}
Write $X_{1}= - l^{2} \pare{T+2\delta}^{\frac{1}{2}} - \epsilon$ and $X_{2} = - l^{2} \pare{T+2\delta}^{\frac{1}{2}} - 2\epsilon$. Let $\zeta \in C^{\infty}$ such that $\zeta \equiv 0$ on $[X_{1},0]$ and $\zeta \equiv 1$ on $[h^{-\frac{1}{2}}x_{+}, X_{2}]$. Let $\chi\pare{x,h} = \zeta\pare{h^{-\frac{1}{2}}x;h}$ such that $\chi$ has support in $[x_{+},x_{A}\pare{\frac{1}{l^{2}} + \pare{T+2\delta}h}]$. Choose $f = e^{\frac{\phi}{h}} \chi \varphi$. By the preceding lemma, we have: 
\begin{flalign*}
\hphantom{A} & \Re\pare{\prods{e^{\frac{\phi}{h}}\pare{-h^{2}\partial_{x}^{2} + V -E} \chi \varphi}{e^{\frac{\phi}{h}}\chi \varphi}} = \\
&  \prods{\pare{-h^{2}\partial_{x}^{2} + A^{2} + h^{2}m^{2} B^{2} + h^{2}m\gamma^{1}B'} e^{\frac{\phi}{h}} \chi \varphi}{e^{\frac{\phi}{h}} \chi \varphi} \quad +\prods{\pare{ - E - \pare{\phi'}^{2}} e^{\frac{\phi}{h}} \chi \varphi}{e^{\frac{\phi}{h}} \chi \varphi}\\
& \quad - 2h \Im\pare{\prods{\begin{pmatrix}
0 & 1 & 0 & 0 \\
0 & 0 & 0 & 0 \\
0 & 0 & 0 & 1 \\
0 & 0 & 0 & 0 
\end{pmatrix} A' e^{\frac{\phi}{h}} \chi \varphi}{e^{\frac{\phi}{h}} \chi \varphi}} 
 \geq \delta h \norme{e^{\frac{\phi}{h}} \chi \varphi}^{2}
\end{flalign*}
Then, by lemma \ref{lemexprpsP+}, we have:
\begin{flalign*}
\Re \prods{e^{\frac{\phi}{h}}\pare{P^{+}-E} \chi \varphi}{e^{\frac{\phi}{h}} \chi \varphi}_{L^{2}\pare{[x_{+},0]}} & = \Re \pare{\prods{e^{\frac{\phi}{h}} \chi \pare{P^{+}-E} \varphi}{e^{\frac{\phi}{h}}\chi \varphi}_{L^{2}\pare{[x_{+},0]}}} \\
& \quad + h^{2} \prods{\varphi}{e^{2\frac{\phi}{h}} \pare{2\frac{\phi'}{h} \chi \chi' + \pare{\chi'}^{2}} \varphi}_{L^{2}\pare{[x_{+},0]}} \\
& \leq h^{2} \prods{\varphi}{e^{2\frac{\phi}{h}} \pare{2\frac{\phi'}{h} \chi \chi' + \pare{\chi'}^{2}} \varphi}_{L^{2}\pare{[x_{+},0]}}\\
& \quad + \norme{e^{\frac{\phi}{h}} \chi \pare{P^{+}-E} \varphi}_{L^{2}\pare{[x_{+},0]}}\norme{e^{\frac{\phi}{h}}\chi \varphi}_{L^{2}\pare{[x_{+},0]}}.
\end{flalign*}
Thus:
\begin{flalign*}
\delta h \norme{e^{\frac{\phi}{h}} \chi \varphi}^{2}_{L^{2}\pare{[x_{+},0]}} & \leq h^{2} \prods{\varphi}{e^{2\frac{\phi}{h}} \pare{2\frac{\phi'}{h} \chi \chi' + \pare{\chi'}^{2}} \varphi}_{L^{2}\pare{[x_{+},0]}}\\
& \quad + \norme{e^{\frac{\phi}{h}} \chi \pare{P^{+}-E} \varphi}_{L^{2}\pare{[x_{+},0]}}\norme{e^{\frac{\phi}{h}}\chi \varphi}_{L^{2}\pare{[x_{+},0]}}.
\end{flalign*}
By the same argument as in \cite{Gan14} proposition $3.12$, we obtain:
\begin{flalign*}
\norme{e^{\frac{\phi}{h}} \chi \varphi}^{2}_{L^{2}\pare{[x_{+},0]}} & \leq 2\delta^{-1}h \prods{\varphi}{e^{2\frac{\phi}{h}} \pare{2\frac{\phi'}{h} \chi \chi' + \pare{\chi'}^{2}} \varphi}_{L^{2}\pare{[x_{+},0]}} \\
& \quad + \pare{\delta h}^{-2} \norme{e^{\frac{\phi}{h}} \chi \pare{P^{+}-E} \varphi}^{2}_{L^{2}\pare{[x_{+},0]}}.
\end{flalign*}
Remark that $\chi'$ is non zero when $x\in h^{\frac{1}{2}}[X_{2},X_{1}]$ so that
\begin{equation*}
\underset{\text{supp}\pare{\chi'}}{\sup} e^{\frac{\phi}{h}} = O\pare{1}, \hspace{3mm} \underset{\text{supp}\pare{\chi'}}{\sup} \abs{\phi'} = O\pare{h^{\frac{1}{2}}}.
\end{equation*}
We also have:
\begin{equation*}
\chi'\pare{x} = h^{-\frac{1}{2}} \zeta'\pare{h^{-\frac{1}{2}}x;h},
\end{equation*}
which imply $\sup \abs{\chi'\pare{x}} = O\pare{h^{-\frac{1}{2}}}$ and $\sup \abs{\frac{\phi'}{h}\chi \chi'} = O\pare{h^{-1}}$. Consequently:
\begin{equation*}
\norme{e^{\frac{\phi}{h}}\varphi}_{L^{2}\pare{[x_{+}, h^{\frac{1}{2}}X_{2}]}} \leq C_{1} \norme{\varphi}_{L^{2}\pare{[x_{+},0]}} + C_{2}h^{-1} \norme{e^{\frac{\phi}{h}} \pare{P^{+}-E} \varphi}_{L^{2}\pare{[x_{+},0]}}.
\end{equation*}
The result then follow from the fact that:
\begin{equation*}
\norme{e^{\frac{\phi}{h}} \varphi}_{L^{2}\pare{[h^{\frac{1}{2}}X_{2},0]}} \leq C_{3} \norme{\varphi}_{L^{2}\pare{[x_{+},0]}}.
\end{equation*}
\end{proof}
Let $S>0$. We choose $A_{1}<A_{2} < x_{A}\pare{\frac{1}{l^{2}}+S}$ and introduce $\Sigma_{i} = [x_{+},A_{i}[$ such that $\Sigma_{1} \Subset \Sigma_{2} \Subset [x_{+},x_{A}\pare{\frac{1}{l^{2}}+S}]$.
\begin{prop}
Now, let $S>0$ such that $\frac{1}{l^{2}}+S < A^{2}\pare{x_{+}}$. There exist constants $h_{0}>0$, $C>0$, and $\epsilon >0$ such that:
\begin{equation*}
\norme{\varphi}_{L^{2}\pare{\Sigma_{1}}} \leq C\pare{ e^{-\frac{\epsilon}{h}} \norme{\varphi}_{L^{2}\pare{\Sigma_{2}}} + \norme{\pare{P^{+}-E\pare{h}}\varphi}_{L^{2}\pare{\Sigma_{2}}}},
\end{equation*}
for all $h\in (0,h_{0})$, $\varphi \in D\pare{P^{+}}$, and for all $E\pare{h}$ such that $E\pare{h} < \frac{1}{l^{2}}+S$.
\end{prop}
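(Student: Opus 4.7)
The strategy is a cutoff argument combined with the weighted estimate of the preceding proposition. Take a smooth cutoff $\chi \in C^\infty([x_+,0[)$ with $\chi \equiv 1$ on a neighborhood of $\Sigma_1$ and $\supp \chi \subset \Sigma_2$. Since $\chi$ vanishes near $0$ and is smooth, the product $\chi\varphi$ belongs to $D(P^+)$ whenever $\varphi$ does (the Dirichlet condition at $x_+$ is preserved). Applying the preceding proposition to $\chi\varphi$ gives
\[
\norme{e^{x^2/(ch)}\chi\varphi}_{L^2([x_+,0[)} \leq C\norme{\chi\varphi}_{L^2} + Ch^{-1}\norme{e^{x^2/(ch)}(P^+ - E(h))(\chi\varphi)}_{L^2}.
\]

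On the left side, restricting to $\Sigma_1$ where $\chi\equiv 1$ and $x^2 \geq A_1^2$ yields the lower bound $e^{A_1^2/(ch)}\norme{\varphi}_{L^2(\Sigma_1)}$. On the right, $\norme{\chi\varphi}_{L^2} \leq \norme{\varphi}_{L^2(\Sigma_2)}$, and the Leibniz rule splits the residual as $(P^+-E(h))(\chi\varphi) = \chi(P^+-E(h))\varphi + [P^+,\chi]\varphi$ with $[P^+,\chi] = -h^2(\chi'' + 2\chi'\partial_x)$ supported in $\Sigma_2\setminus \Sigma_1$. On that set the weight $e^{x^2/(ch)}$ is bounded by $e^{A_1^2/(ch)}$. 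The factor $\partial_x\varphi$ inside the commutator is controlled via the identity $P^+ = H^2$ together with a standard localized elliptic estimate in $\Sigma_2$, so the commutator contributes only a multiple of $e^{A_1^2/(ch)}h(\norme{\varphi}_{L^2(\Sigma_2)} + \norme{(P^+-E(h))\varphi}_{L^2(\Sigma_2)})$.

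The delicate step is controlling the remaining piece $\chi(P^+-E(h))\varphi$. Its support is all of $\Sigma_2$, but on $\Sigma_2$ the weight $e^{x^2/(ch)}$ can reach up to $e^{x_+^2/(ch)} \gg e^{A_1^2/(ch)}$, which would produce an exponentially large prefactor on the residual norm. To avoid this, one replaces the weight $\phi(x) = x^2/c$ by a modified Agmon weight $\tilde\phi$ that agrees with $x^2/c$ on $\Sigma_1$ but is capped at $A_1^2/c$ on $[A_1,0[$, and reproves the weighted estimate with $\tilde\phi$ in place of $\phi$. The key positivity condition $A^2(x) - E(h) - (\tilde\phi')^2 > \delta h$ from the Agmon lemma still holds on $\supp \chi \subset \Sigma_2$ since $A^2 > \tfrac{1}{l^2}+S > E(h)$ there and $|\tilde\phi'|$ can be made arbitrarily small by spreading the cap over a sufficiently wide transition. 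I expect this weight modification and the re-verification of the positivity hypothesis to be the main technical hurdle.

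Dividing through by $e^{A_1^2/(ch)}$ (or by the analogous lower bound obtained with $\tilde\phi$) then yields the announced inequality with $\epsilon$ of order $A_1^2/c$ and the residual norm on $\Sigma_2$ free of exponential factors.
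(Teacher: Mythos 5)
Your overall strategy — cut off to $\Sigma_2$, apply a weighted Agmon identity, then divide out the exponential on $\Sigma_1$ — is the same as the paper's, but the execution has two concrete slips and one unnecessary detour.

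First, the weight modification is stated backwards. As written, $\tilde\phi$ ``agrees with $x^2/c$ on $\Sigma_1$ but is capped at $A_1^2/c$ on $[A_1,0[$'' changes nothing: on $[A_1,0[$ one already has $x^2/c\le A_1^2/c$, while on $\Sigma_1=[x_+,A_1[$ the weight still climbs to $e^{x_+^2/(ch)}$, so dividing by the infimum $e^{A_1^2/(ch)}$ still leaves an exponentially large prefactor $e^{(x_+^2-A_1^2)/(ch)}$ on the residual — the very obstruction you identified. What is needed is the opposite cap, $\tilde\phi=\min(x^2/c,\,A_1^2/c)$, constant on $\Sigma_1$. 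Second, the positivity hypothesis you import, $A^2-E-(\tilde\phi')^2>\delta h$, is the one appropriate for the previous proposition (where $E$ can be $1/l^2+Th$ and one works near the well); using it here reproduces the $h^{-1}$ of Proposition 3.6 in front of $\norme{(P^+-E)\varphi}_{L^2(\Sigma_2)}$, which the present statement does not allow. On $\Sigma_2\Subset[x_+,x_A(\tfrac1{l^2}+S)]$ one has the much stronger $A^2-E>\delta$ for a fixed $\delta>0$, and this is precisely what removes the $h^{-1}$. Related to this, your claim that the commutator contributes ``$e^{A_1^2/(ch)}h(\dots)$'' would, after dividing, give only an $O(h)$ factor on $\norme{\varphi}_{L^2(\Sigma_2)}$, not $e^{-\epsilon/h}$; to get an exponential gain you need $\supp\chi'$ strictly interior to $[A_1,0)$ where the weight is $\le e^{(A_1^2-\epsilon')/(ch)}$.

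The paper takes a cleaner route: it does not modify the quadratic weight at all, but instead uses $\phi=\epsilon\chi_1$ where $\chi_1$ is a cutoff with $\chi_1\equiv 1$ on $\Sigma_1$, together with a second cutoff $\chi_2\equiv 1$ on $\supp\chi_1$ and $\supp\chi_2\subset\Sigma_2$. Since $\chi_1\equiv 0$ on $\supp\chi_2'$, the weight is identically $1$ there, so the commutator term carries no exponential and is read off directly from Lemma \ref{lemexprpsP+} in the form $h^2\prods{\varphi}{e^{2\phi/h}(2\tfrac{\phi'}{h}\chi_2\chi_2'+(\chi_2')^2)\varphi}$ — involving $\varphi$ alone, not $\partial_x\varphi$. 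This dispenses with the localized elliptic estimate you invoke. Dividing by $e^{\epsilon/h}$ then produces exactly $e^{-\epsilon/h}\norme{\varphi}_{L^2(\Sigma_2)}+\norme{(P^+-E)\varphi}_{L^2(\Sigma_2)}$ with an $h$-independent constant.
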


\begin{proof}
For $\delta$ sufficiently small, we can suppose that $\Sigma_{2} \Subset [x_{+}, x_{A}\pare{\frac{1}{l^{2}}+ S+2\delta}]$. Let $\chi_{1}$ such that $\chi_{1} \equiv 1$ on $\Sigma_{1}$ and $\text{supp}\pare{\chi_{1}} \subseteq \Sigma_{2}$. Let $\chi_{2}$ such that $\chi_{2} \equiv 1$ on $\text{supp}\pare{\chi_{1}}$ and $\text{supp}\pare{\chi_{2}} \subseteq \Sigma_{2}$. Then, on $\Sigma_{2}$, $A^{2} - \pare{\frac{1}{l^{2}}+S} > \delta$. For $h$ and $\epsilon$ sufficiently small, we also have:
\begin{equation*}
A^{2}- h\abs{A'} - \pare{\frac{1}{l^{2}}+S} - \epsilon^{2} \pare{\chi_{1}'}^{2}> \delta.
\end{equation*}
Using the same reasoning as in the preceding proof, we obtain:
\begin{flalign*}
\delta  \norme{e^{\frac{\phi}{h}} \chi_{2} \varphi}^{2}_{L^{2}\pare{[x_{+},0]}} & \leq h^{2} \prods{\varphi}{e^{2\frac{\phi}{h}} \pare{2\frac{\epsilon \chi_{1}'}{h} \chi_{2} \chi_{2}' + \pare{\chi_{2}'}^{2}} \varphi}_{L^{2}\pare{[x_{+},0]}}\\
& \quad + \norme{e^{\frac{\phi}{h}} \chi_{2} \pare{P^{+}-E} \varphi}_{L^{2}\pare{[x_{+},0]}}\norme{e^{\frac{\phi}{h}}\chi_{2} \varphi}_{L^{2}\pare{[x_{+},0]}},
\end{flalign*}
where we have taken $\phi \pare{x} = \epsilon \chi_{1}\pare{x}$. As before, we obtain:
\begin{flalign*}
\norme{e^{\frac{\phi}{h}} \chi_{2} \varphi}^{2}_{L^{2}\pare{[x_{+},0]}} & \leq 2\delta^{-1}h^{2} \prods{\varphi}{e^{2\frac{\phi}{h}} \pare{2\frac{\epsilon \chi_{1}'}{h} \chi_{2} \chi_{2}' + \pare{\chi_{2}'}^{2}} \varphi}_{L^{2}\pare{[x_{+},0]}} \\
& \quad + \pare{\delta}^{-2} \norme{e^{\frac{\phi}{h}} \chi_{2} \pare{P^{+}-E} \varphi}^{2}_{L^{2}\pare{[x_{+},0]}}
\end{flalign*}
Thanks to the choice of $\chi_{1}$ and $\chi_{2}$, we have:
\begin{equation*}
 \delta \norme{e^{\frac{\epsilon \chi_{1}}{h}} \chi_{2} \varphi}_{L^{2}\pare{\Sigma_{1}}} = \delta \norme{e^{\frac{\epsilon}{h}} \varphi}_{L^{2}\pare{\Sigma_{1}}}, \hspace{5mm} \delta \norme{e^{\frac{\epsilon}{h}} \varphi}_{L^{2}\pare{\Sigma_{1}}} \leq \delta \norme{e^{\frac{\phi}{h}} \chi_{2}\varphi}_{L^{2}\pare{[x_{+},0]}}.
\end{equation*}
Since $\chi_{1} \equiv 0$ on the support of $\chi_{2}'$ and $\epsilon \chi_{1} \leq \epsilon$, we obtain:
\begin{equation*}
\norme{e^{\frac{\epsilon}{h}} \varphi}_{L^{2}\pare{\Sigma_{1}}} \leq 2\delta^{-1}h^{2} \sup \pare{\pare{\chi_{2}'}^{2}} \norme{\varphi}_{L^{2}\pare{\Sigma_{2}}} + \delta^{-2} e^{\frac{\epsilon}{h}} \norme{\pare{P^{+} - E} \varphi}_{L^{2}\pare{\Sigma_{2}}}
\end{equation*}
which gives the result when multiplying by $e^{-\frac{\epsilon}{h}}$.
\end{proof}

\subsection{Estimates for the eigenvalues of \texorpdfstring{$P^{+}$}{P+}}

In this section, we prove that it is possible to find an eigenvalue of the operator $P^{+}$ at distance $O\pare{h^{\frac{1}{2}}}$ from the eigenvalue of $\tilde{P}$ obtained in \ref{VpPtilde}. We have:
\begin{prop}
Let $T>0$. There exists $h_{0}>0$ such that for all $h\in (0,h_{0})$, there exists an eigenvalue $E^{+}\pare{h}$ of $P^{+}$ satisfying:
\begin{equation*}
\abs{E^{+}\pare{h} - E_{2}\pare{h}} <Ch^{\frac{1}{2}},
\end{equation*}
where $E_{2}\pare{h} = \frac{1}{l^{2}} + \pare{2\alpha_{1}+1}\pare{\frac{1}{l^{4}} + \frac{h}{2l^{6}}}^{\frac{1}{2}} h$
\end{prop}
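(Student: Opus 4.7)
The plan is to obtain two-sided control on an eigenvalue of $P^{+}$ by combining a min-max upper bound, built from the trial function of Proposition \ref{VpPtilde}, with a direct lower bound coming from the positivity properties of the potential. Since $P^{+}$ has compact resolvent (its underlying interval is bounded and the singular $1/x^{2}$ part of $V$ is dominated by the kinetic term via the Hardy-type estimate already used above), its spectrum is discrete.

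For the upper bound, take the $4$-component trial function $\varphi$ of Proposition \ref{VpPtilde}, whose components $\psi_{1}$ and $\psi_{2}$ are Gaussian-weighted and concentrated on the scale $\abs{x}\sim h^{1/2}$ near $0$. Multiplying by a smooth cutoff $\chi\in C^{\infty}_{c}([x_{+}, 0[)$ equal to $1$ on a fixed neighborhood $[-\eta, 0]$ of $0$ and vanishing at $x_{+}$ produces an element of $D(P^{+})$, and the Gaussian tails of $\psi_{1,2}$ give $\norme{\chi\varphi}^{2} = \norme{\varphi}^{2} + O(e^{-c/h})$. Splitting
\begin{equation*}
\prods{P^{+}\chi\varphi}{\chi\varphi} = \prods{\tilde{P}\chi\varphi}{\chi\varphi} + \prods{(V - V_{\tilde{P}})\chi\varphi}{\chi\varphi},
\end{equation*}
the first term is controlled by Proposition \ref{VpPtilde} (up to exponentially small cutoff corrections) and bounded by $\pare{E_{2}(h) + h/2}\norme{\chi\varphi}^{2} + O(e^{-c/h})$. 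For the second, the Taylor expansions of $A$, $B$, $A'$, $B'$ recalled at the start of this subsection give $V - V_{\tilde{P}} = O(x^{3}) + O(h^{2})$ in a fixed neighborhood of $0$, which yields an $O(h)$ contribution after integration against $\abs{\varphi}^{2}$ concentrated on scale $h^{1/2}$, while the contribution from the complement is exponentially small. The min-max principle then produces an eigenvalue $E^{+}(h)$ of $P^{+}$ with $E^{+}(h) \leq E_{2}(h) + Ch$.

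For the lower bound, decompose $P^{+} = \pare{-h^{2}\partial_{x}^{2} + h^{2}m^{2}B^{2} + ih^{2}m\gamma^{1}B'} + A^{2} - ih\gamma^{1}\gamma^{2}A'$. The lemma of the previous subsection asserts that the first bracket is nonnegative on $D(P^{+})$. On $[x_{+},0[$ one computes $A^{2}(x) - 1/l^{2} = (r - 2M)/r^{3}$, and $r_{+} > 2M$ because the defining condition $\frac{F(r_{+})}{4l^{2}} \geq \frac{F'(r_{+})^{2}}{16}$, i.e.\ $1 \geq 4M/r_{+} + M^{2}l^{2}/r_{+}^{4}$, fails at $r = 2M$; hence $A^{2} \geq 1/l^{2}$ on $[x_{+},0[$. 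The remaining antisymmetric term has operator norm at most $h\norme{A'}_{L^{\infty}([x_{+},0[)}$. Altogether $\prods{P^{+}\psi}{\psi} \geq (1/l^{2} - Ch)\norme{\psi}^{2}$ on $D(P^{+})$, so every eigenvalue of $P^{+}$ is $\geq 1/l^{2} - Ch$.

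Combining both bounds and using $E_{2}(h) - 1/l^{2} = O(h)$ yields $\abs{E^{+}(h) - E_{2}(h)} = O(h) \leq Ch^{1/2}$ for $h$ small enough, which is the claim (with $E^{+}(h)$ the lowest eigenvalue of $P^{+}$). The main technical difficulty is the bookkeeping of the cutoff errors and of the potential comparison; this is made tractable by the explicit Gaussian decay of the trial function of Proposition \ref{VpPtilde}, which confines all non-negligible contributions to the scale $\abs{x}\lesssim h^{1/2}$, exactly where the Taylor approximation of $V$ by $V_{\tilde{P}}$ is effective.
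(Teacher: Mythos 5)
Your proof is correct, but it follows a genuinely different route from the paper's. The paper does \emph{not} compare $P^{+}$ to $\tilde{P}$ directly; it builds an auxiliary operator $P^{\#}$ (the simplified confining model plus the diagonal shift $(E_{2}-E_{1})(h)\,K\,\mathrm{diag}(1,0,0,1)\,K^{-1}$) for which $\varphi_{2}$, with $K^{-1}\varphi_{2}=(\psi_{1},\psi_{2},\psi_{2},\psi_{1})^{T}$, is an \emph{exact} eigenvector with eigenvalue $E_{2}(h)$. It then cuts $\varphi_{2}$ off, estimates the residual $\norme{(P^{+}-E_{2}(h))(\chi\varphi_{2})}=O(h^{1/2})$ term by term (the commutator giving $O(e^{-\epsilon/h})$, the potential remainder $R(h)$ giving $O(h^{1/2})$ under $y=h^{-1/2}x$), and invokes the spectral theorem. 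You instead sandwich the \emph{lowest} eigenvalue of $P^{+}$ between a min--max upper bound, obtained from the trial function of Proposition~\ref{VpPtilde} together with a Taylor comparison of $V$ against the model potential, and a quadratic-form lower bound coming from the positivity lemma combined with the identity $A^{2}(x)-1/l^{2}=(r-2M)/r^{3}$ and the observation that the defining inequality for $r_{+}$ forces $r_{+}>4M>2M$, hence $A^{2}\geq 1/l^{2}$ on $[x_{+},0[$. This lower-bound observation is not in the paper, and is a nice elementary input; your sandwich in fact localizes the lowest eigenvalue to an $O(h)$ neighbourhood of $E_{2}(h)$, which is sharper than the $O(h^{1/2})$ the paper extracts, although of course both suffice for the statement. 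The paper's route, on the other hand, produces a concrete quasi-resonant state $\chi\varphi_{2}$ in hand, which is structurally what is reused in the subsequent cutoff construction.

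Two small points worth tightening. First, your justification for the discreteness of $\sigma(P^{+})$ is a bit off: the relevant facts are that $[x_{+},0)$ is bounded, there is a Dirichlet condition at $x_{+}$, and the potential blows up like $h^{2}m^{2}l^{2}/x^{2}$ at $0$, so the form domain embeds compactly into $L^{2}$; dominance of the $1/x^{2}$ part by the kinetic term is what gives form-boundedness and the limit-point behaviour at $0$, not discreteness per se. Second, the step replacing $\prods{\tilde{P}\chi\varphi}{\chi\varphi}$ by $\prods{\tilde{P}\varphi}{\varphi}+O(e^{-c/h})$ deserves a line: Proposition~\ref{VpPtilde} controls $\prods{\tilde{P}\varphi}{\varphi}$ on the whole half-line, and you must observe that both the commutator $[\tilde{P},\chi]$ and the tail contribution of $(1-\chi^{2})$ are exponentially small because $\tilde{P}\varphi$ inherits the Gaussian decay of $\varphi$ away from the origin.
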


\begin{proof}
We introduce:
\begin{flalign*}
P^{\#} & = -h^{2}\partial_{x}^{2} + \frac{1}{l^{2}} + \pare{\frac{1}{l^{4}} + \frac{h}{2l^{6}}}x^{2} + h^{2}\pare{\frac{m^{2}l^{2}}{x^{2}} + i\gamma^{1} \frac{ml}{x^{2}}} \\
& \quad + \pare{E_{2}- E_{1}}\pare{h} K\begin{pmatrix}
1 & 0 & 0 & 0 \\
0 & 0 & 0 & 0 \\
0 & 0 & 0 & 0 \\
0 & 0 & 0 & 1
\end{pmatrix}K^{-1}
\end{flalign*}
where $E_{1}\pare{h} = \frac{1}{l^{2}} +  \pare{2ml+1}\pare{\frac{1}{l^{4}} + \frac{h}{2l^{6}}}^{\frac{1}{2}}h$ such that $\pare{E_{2} - E_{1}}\pare{h} >0$ and \\ $K= \frac{1}{\sqrt{2}} \begin{pmatrix}
1 & 0 & 1 & 0 \\
0 & 1 & 0 & 1 \\
1 & 0 & -1 & 0 \\
0 & 1 & 0 & -1
\end{pmatrix}$. This operator admit $E_{2}\pare{h}$ as an eigenvalue. Denote $\varphi_{2}$ an associated eigenvector. Let $S>0$ such that $\frac{1}{l^{2}}+S < A^{2}\pare{x_{+}}$ and $\mathcal{A}<x_{A}\pare{\frac{1}{l^{2}}+S}$. Let $\chi \in C^{\infty}$ such that $\chi \equiv 1$ on $[\mathcal{A},0[$ and $\text{supp} \pare{\chi} = [x_{+},0]$. Then $\chi \varphi_{2} \in D\pare{P^{+}}$ and:
\begin{equation*}
\pare{P^{+} - E_{2}\pare{h}}\pare{\chi \varphi_{2}} = \pare{P^{+} - P^{\#}}\pare{\chi \varphi_{2}} + \pare{P^{\#} - E_{2}\pare{h}} \pare{\chi \varphi_{2}} = \chi R\pare{h} \varphi_{2} + \left [ -h^{2}\partial_{x}^{2}, \chi \right ] \varphi_{2},
\end{equation*}
where:
\begin{flalign*}
R\pare{h} & = A^{2}\pare{x} - \pare{\frac{1}{l^{2}} + \frac{x^{2}}{l^{4}}} - h \frac{x^{2}}{l^{6}} - ih \gamma^{1}\gamma^{2} A'\pare{x} + h^{2}m^{2}\pare{B^{2}\pare{x} - \frac{l^{2}}{x^{2}}} \\
& \quad + ih^{2} m\gamma^{1}\pare{B'\pare{x} - \frac{l}{x^{2}}} + \pare{E_{2}-E_{1}}\pare{h}K\begin{pmatrix}
1 & 0 & 0 & 0 \\
0 & 0 & 0 & 0 \\
0 & 0 & 0 & 0 \\
0 & 0 & 0 & 1
\end{pmatrix}K^{-1} \\
& = o\pare{x^{2}} + h O\pare{x^{2}} + h O\pare{x} + O\pare{h^{2}} + O\pare{h}.
\end{flalign*}
Recall that $K^{-1}\varphi_{2}\pare{x} = \begin{pmatrix}
\psi_{1}\pare{x} \\
\psi_{2}\pare{x} \\
\psi_{2}\pare{x} \\
\psi_{1}\pare{x}
\end{pmatrix}$ where $\psi_{1}\pare{x} = h^{-\frac{1}{4}} \pare{h^{-\frac{1}{2}}x}^{ml} e^{- \pare{\frac{1}{l^{4}}+ \frac{h}{2l^{6}}}^{\frac{1}{2}} \frac{x^{2}}{2h}}$ and $\psi_{2}\pare{x} = h^{-\frac{1}{4}} \pare{h^{-\frac{1}{2}}x}^{\alpha_{1}} e^{- \pare{\frac{1}{l^{4}}+ \frac{h}{2l^{6}}}^{\frac{1}{2}} \frac{x^{2}}{2h}}$ with $\alpha_{1} = \frac{1+\sqrt{1+4ml\pare{ml+1}}}{2}$. Using the change of variable $y = h^{-\frac{1}{2}}x$ and the exponential decay of our eigenvector, we see that:
\begin{equation*}
\norme{\chi R\pare{h} \varphi_{2}}_{L^{2}\pare{[x_{+},0]}} = O\pare{h^{\frac{1}{2}}}.
\end{equation*}
The commutator is equal to:
\begin{equation*}
\left [-h^{2}\partial_{x}^{2}, \chi \right ] = -h^{2} \chi'' - h^{2} \chi' \partial_{x}.
\end{equation*}
The support of $\chi'$ and $\chi''$ and the formula for $\varphi_{2}$ give $\epsilon >0$ such that:
\begin{equation*}
\norme{\left [-h^{2}\partial_{x}^{2}, \chi \right ] \varphi_{2}}_{L^{2}\pare{[x_{+},0]}} = O\pare{e^{-\frac{ \epsilon}{h}}}.
\end{equation*}
Consequently, we have:
\begin{equation*}
\norme{\pare{P^{+} - E_{2}\pare{h}}\pare{\chi \varphi_{2}}}_{L^{2}\pare{[x_{+},0]}} = O\pare{h^{\frac{1}{2}}}.
\end{equation*}
Since $\norme{\chi \varphi_{2}}_{L^{2}\pare{[x_{+},0]}} = 1 - O\pare{e^{-\frac{d}{h}}}$, we can normalize $\chi\varphi_{2}$. The spectral theorem gives an eigenvalue $E^{+}\pare{h}$ of $P^{+}$ such that:
\begin{equation*}
\abs{E^{+}\pare{h} - E_{2}\pare{h}} \leq C h^{\frac{1}{2}}
\end{equation*}
for a certain constant $C>0$.
\end{proof}

\subsection{Construction of quasimodes}

We now introduce the operator $H^{+}$ which is $H$ on $[x_{+},0[$ with domain:
\begin{equation*}
D\pare{H^{+}} = \{\varphi \in L^{2}\pare{[x_{+},0[} \lvert H^{+}\varphi \in L^{2}\pare{[x_{+},0[}, \varphi\pare{x_{+}}=0 \}.
\end{equation*}
We have $P^{+} = \pare{H^{+}}^{2}$ and $\pare{E^{+}}^{2}$ is an eigenvalue of $P^{+}$ if and only if $\pm E^{+}$ is an eigenvalue of $H^{+}$. We then have the:
\begin{theo}
Let $S>0$ such that $\frac{1}{l^{2}}+S < A^{2}\pare{x_{+}}$. There exist constants $h_{0} >0$, $D >0$, a real number $\pare{E^{+}\pare{h}}^{\frac{1}{2}}>0$ such that $E^{+}\pare{h} < \frac{1}{l^{2}} + S$ and a function $\varphi_{h} \in D\pare{H}$ such that $\norme{\varphi_{h}}_{L^{2}\pare{]-\infty,0[}} = 1$ satisfying:
\begin{equation*}
\norme{\pare{H - \pare{E^{+}\pare{h}}^{\frac{1}{2}}}\varphi_{h}}_{L^{2}\pare{]-\infty,0[}} \leq e^{-\frac{D}{h}},
\end{equation*}
for all $h \in (0,h_{0})$.
\end{theo}

\begin{proof}
Let $A< x_{A}\pare{\pare{\frac{1}{l^{2}}+S}}$ and $\chi \in C^{\infty}_{0}$ such that $\chi \equiv 1$ on $[A,0[$ and $\text{supp}\pare{\chi} = [x_{+},0[$. Denote $\varphi^{+}_{h}$ an eigenvector of $H^{+}$ for the eigenvalue $\pare{E^{+}\pare{h}}^{\frac{1}{2}}$ such that $E^{+}\pare{h} <\frac{1}{l^{2}} + S$ which exists by the last section. We use $\varphi_{h} = \chi \varphi^{+}_{h}$ that we extend by $0$ outside $[x_{+},0[$ so that $\varphi_{h} \in D\pare{H}$ and:
\begin{equation*}
\norme{\pare{H - \pare{E^{+}\pare{h}}^{\frac{1}{2}}} \varphi_{h}}_{L^{2}\pare{]-\infty,0[}} = \norme{\left [ i\gamma^{0}\gamma^{1} h\partial_{x}, \chi \right ] \varphi^{+}_{h}}_{L^{2}\pare{]-\infty,0[}}.
\end{equation*}
Using Agmon estimates, we obtain:
\begin{equation*}
\norme{\left [ i\gamma^{0}\gamma^{1} h\partial_{x}, \chi \right ]\varphi^{+}_{h}} = \norme{i\gamma^{0}\gamma^{1} h \chi' \varphi^{+}_{h}} \leq C e^{-\frac{\epsilon}{h}} \norme{\varphi^{+}_{h}}.
\end{equation*}
This gives:
\begin{equation*}
\norme{\pare{H - \pare{E^{+}\pare{h}}^{\frac{1}{2}}} \varphi_{h}}_{L^{2}\pare{]-\infty,0[}} \leq e^{-\frac{D}{h}}.
\end{equation*}
Since we can normalize $\varphi_{h}$, the result is proven.
\end{proof}

\section{Lower bound on the local energy}

In this section, we will make use of an argument in \cite{HolSmu3} to prove the:
\begin{theo}
For all compact $K \subset ]-\infty,0[$, there exists a constant $C>0$ such that:
\begin{equation*}
\underset{t \to +\infty}{\limsup} \underset{\varphi \in \mathcal{H}, \norme{\varphi}= 1}{\sup} \ln\pare{t}\norme{e^{itH}\varphi}_{L^{2}\pare{K}} \geq C.
\end{equation*}
\end{theo}

\begin{proof}
Let $\varphi_{h}$ be a quasimode as in the precedent section. We introduce the function $\psi_{h} = e^{iE^{+}\pare{h}t}\varphi_{h}$. Then:
\begin{equation*}
-i \partial_{t} \psi_{h} - H \psi_{h} = - \pare{H-E^{+}\pare{h}}\psi_{h}
\end{equation*}
and $\norme{\pare{H-E^{+}\pare{h}}\psi_{h}}_{L^{2}\pare{]-\infty,0[}} \leq e^{-\frac{D}{h}}$. We consider the preceding equation as non homogeneous equation of the form:
\begin{equation*}
-i \partial_{t} \chi - H \chi = F
\end{equation*}
where $\norme{F}_{L^{2}\pare{]-\infty,0[}} \leq e^{-\frac{D}{h}}$. Let $\tilde{\psi}_{h} = e^{itH}\varphi_{h}$ a solution of the homogeneous equation. By Duhamel formula, we have:
\begin{equation*}
\chi = e^{itH}\varphi_{h} + \int_{0}^{t} e^{i\pare{t-s}H}\varphi_{h} F ds.
\end{equation*}
Consequently:
\begin{flalign*}
\abs{\norme{\psi_{h}}_{L^{2}\pare{K}} - \norme{ \tilde{\psi}_{h}}_{L^{2}\pare{K}}} & \leq \norme{\psi_{h} - \tilde{\psi}_{h}}_{L^{2}\pare{K}}\\
& \leq \int_{0}^{t} \norme{e^{i\pare{t-s}H}\varphi_{h} F}_{L^{2}\pare{K}} ds \\
& \leq Ct \norme{F} \leq Ct e^{-\frac{D}{h}}.
\end{flalign*}
We deduce that:
\begin{equation*}
\norme{e^{itH}\varphi_{h}}_{L^{2}\pare{K}} \geq \norme{\psi_{h}}_{L^{2}\pare{K}} - Ct e^{-\frac{D}{h}} = \norme{\varphi_{h}}_{L^{2}\pare{K}}- Ct e^{-\frac{D}{h}}.
\end{equation*}
Let $\lambda= \norme{\varphi_{h}}_{L^{2}\pare{K}}$. Take $t_{h} = \frac{\lambda-h}{C}e^{\frac{D}{h}}$ such that $\lambda - Ct_{h} e^{-\frac{D}{h}} = h$. We then have:
\begin{equation*}
h = \frac{D}{\ln\pare{t_{h}}} + \frac{h\ln\pare{\frac{\lambda-h}{C}}}{\ln\pare{t_{h}}}.
\end{equation*}
For $h$ sufficiently small, we have $\abs{h\ln\pare{\frac{\lambda-h}{C}}} \leq \frac{D}{2}$ which gives:
\begin{equation*}
\norme{e^{it_{h}H}\varphi_{h}}_{L^{2}\pare{K}} \geq \frac{D}{2\ln\pare{t_{h}}}.
\end{equation*}
When $h$ goes to $0$, $t_{h}$ goes to $+ \infty$ and we have:
\begin{equation*}
\underset{t \to +\infty}{\limsup} \underset{\varphi \in \mathcal{H}, \norme{\varphi}= 1}{\sup} \ln\pare{t} \norme{e^{itH}\varphi}_{L^{2}\pare{K}} \geq \frac{D}{2} >0
\end{equation*}
\end{proof}

\bibliographystyle{plain-fr}
\bibliography{BiblioQuasimodes}
\nocite{*}

\end{document}